\documentclass[conference]{IEEEtran}
\usepackage{amsmath,amssymb,amsfonts}
\usepackage{algorithmic}
\usepackage{graphicx}
\usepackage{textcomp}
\usepackage{cuted}
\usepackage{xcolor}
\usepackage[hidelinks]{hyperref}
\usepackage{comment}
\usepackage{amsthm}
\usepackage[nobottomtitles]{titlesec}
\usepackage[sorting=none, style=nature]{biblatex} 
\bibliography{bib}

\usepackage[capitalize]{cleveref}
\usepackage{etoolbox}
\AtBeginEnvironment{appendices}{\crefalias{section}{appendix}}

\usepackage{caption}
\usepackage{subcaption}
\usepackage[stable]{footmisc}
\usepackage{physics}
\usepackage{cancel}

\newtheorem{theorem}{Theorem}[section]

\newtheorem{proposition}[theorem]{Proposition}
\newtheorem{lemma}[theorem]{Lemma}
\newtheorem{remark}[theorem]{Remark}

\newcommand{\EE}{\mathbb{E}}
\newcommand{\II}{\mathbb{I}}

\newcommand{\onesvec}{\boldsymbol{1}}

\newcommand{\Hvec}{\boldsymbol{H}}
\newcommand{\Mvec}{\boldsymbol{M}}

\newcommand{\tvec}{\boldsymbol{t}}
\newcommand{\uvec}{\boldsymbol{u}}

\newcommand{\SKR}{\mathrm{SKR}}
\newcommand{\SKF}{\mathrm{SKF}}

\hypersetup{ colorlinks=true, citecolor=blue, linkcolor=blue, filecolor=blue, urlcolor=blue }

\def\BibTeX{{\rm B\kern-.05em{\sc i\kern-.025em b}\kern-.08em
    T\kern-.1667em\lower.7ex\hbox{E}\kern-.125emX}}
\begin{document}

\title{Probabilistic Cutoffs in Homogeneous Quantum Repeater Chains
}

\author{
    \IEEEauthorblockN{Jeroen Grimbergen\IEEEauthorrefmark{1}\IEEEauthorrefmark{2}\IEEEauthorrefmark{3}\IEEEauthorrefmark{5}, Stav Haldar\IEEEauthorrefmark{4}, \'Alvaro G. I\~nesta, \IEEEauthorrefmark{1}\IEEEauthorrefmark{2}\IEEEauthorrefmark{3} and Stephanie Wehner\IEEEauthorrefmark{1}\IEEEauthorrefmark{2}\IEEEauthorrefmark{3}}
    \IEEEauthorblockA{\IEEEauthorrefmark{1}QuTech, Delft University of Technology, Lorentzweg 1, 2628 CJ Delft, The Netherlands}
    \IEEEauthorblockA{\IEEEauthorrefmark{2}Quantum Computer Science, EEMCS, Delft University of Technology, Mekelweg 4, 2628 CD Delft, The Netherlands}
    \IEEEauthorblockA{\IEEEauthorrefmark{3}Kavli Institute of Nanoscience, Delft University of Technology, Lorentzweg 1, 2628 CJ Delft, The Netherlands}
    \IEEEauthorblockA{\IEEEauthorrefmark{4} College of Information and Computer Science, University of Massachusetts Amherst, \\140 Governors Dr, Amherst, Massachusetts 01002, USA}
    \IEEEauthorblockA{\IEEEauthorrefmark{5} Corresponding author. Email: j.grimbergen@tudelft.nl}
}

\maketitle
\thispagestyle{plain}
\pagestyle{plain}

\begin{abstract}
We study quantum repeater chains in which entangled links between neighbouring nodes are created through heralded entanglement generation and adjacent links are swapped as soon as possible. Since heralded entanglement generation attempts succeed only probabilistically, some links will have to be stored in quantum memories at the nodes of the chain while waiting for adjacent links to be generated. The fidelity of these stored links decreases with time due to decoherence, and if they are stored for too long then this can lead to low end-to-end fidelity. Previous work has shown that the end-to-end fidelity can be improved by deterministically discarding links when their ages exceed some cutoff value. Such deterministic cutoff policies provide strict control of the fidelity of all links, but they come at the expense of having to track link ages. In this work, we introduce a probabilistic cutoff policy that does not require tracking link ages, at the cost of abandoning strict control of the fidelity. We benchmark this new probabilistic cutoff policy against a deterministic cutoff policy. We compare the policies in terms of the end-to-end rate and fidelity, and the secret-key rate. We find that even though the probabilistic cutoff policy keeps track of less state, it can provide secret-key rates of the same order of magnitude as the deterministic cutoff policy in chains with few nodes or high elementary link generation probabilities. Moreover, we identify a scenario in which the probabilistic cutoff policy can deliver end-to-end links that are required to have some minimum threshold fidelity at a higher rate than the deterministic cutoff policy.
\end{abstract}

\begin{IEEEkeywords}
Quantum repeater chain, Cutoff policy, Secret-key rate
\end{IEEEkeywords}

\section{Introduction}\label{section: introduction}
A global quantum network could realize information processing tasks that are impossible with classical networks \cite{wehner2018quantum,kimble2008quantum}. Known examples of such tasks include provably secure communication \cite{bennett2014quantum, ekert1991quantum,bennett1992quantum,shor2000simple}, blind quantum computation \cite{broadbent2009universal} and distributed quantum computation \cite{gottesman1999demonstrating,cirac1999distributed,preskill1999plug}. Quantum networks could also be used for quantum sensing applications such as extending the baseline of telescopes \cite{gottesman2012longer} or clock synchronization \cite{komar2014quantum}.

The basic ingredient in many of these applications is bipartite entanglement. It is possible to create bipartite entanglement between two physically connected network nodes using a heralded entanglement generation (HEG) protocol \cite{barrett2005efficient,cabrillo1999creation}. The physical connection is typically an optical channel. It could be realized, for example, with already deployed optical fibre \cite{stolk2024metropolitan}. However, a round of the HEG protocol succeeds only probabilistically due to channel loss, and the success probability decreases with the length of the channel. For optical fibre the decrease is exponential.

Hence, to create entanglement between two distant nodes, which we will refer to as the end nodes, previous work has proposed the use of a quantum repeater chain \cite{briegel1998quantum}. The idea is to place a chain of quantum repeater nodes between the respective end nodes such that neighbouring nodes in the chain are physically connected. A HEG protocol can then be used to generate elementary entangled links between neighbouring nodes and with entanglement swapping operations \cite{zukowski1993event,duan2001long} it is possible to combine elementary links into longer links connecting non-neighbouring nodes. The process is illustrated in \cref{fig: repeater chain schematic} for a chain of five nodes. A more detailed explanation is provided in \cref{section: model}. The reader may also refer to Refs. \cite{munro2015inside,azuma2023quantum} for reviews on quantum repeater chains.

Crucially, due to the probabilistic nature of the HEG protocol, some links must be stored in the chain while waiting for their neighbouring links to be generated. These stored links are subject to time-dependent decoherence: Their fidelity to a maximally entangled state decreases with time. Previous work has addressed this problem using deterministic cutoff policies that discard links when they have been stored in memory for too long \cite{rozpkedek2019near,collins2007multiplexed,rozpkedek2018parameter,khatri2019practical,li2021efficient,haldar2023fast,kamin2023exact,inesta2023optimal, reiss2023deep}. Such deterministic cutoff policies provide strict control of the fidelity, but their implementation requires tracking the ages of all the links in the chain and communicating these ages between the nodes.

In this work, we propose a probabilistic cutoff policy that does not require tracking or communicating any ages. Instead, entangled links are discarded with a fixed probability after each round of HEG and swaps, regardless of their ages.
We note that the probabilistic cutoff policy abandons any attempt at strict control of the fidelity. Indeed, there is a nonzero probability that links in the repeater chain decohere completely, while high-fidelity links may be discarded. One may wonder what the cost is of giving up this control of the fidelity in terms of entanglement distribution performance.
We address this question by benchmarking the probabilistic cutoff policy against a deterministic cutoff policy. Our main findings are the following:
\begin{itemize}
    \item \textbf{Rate and fidelity.} For the same entanglement generation rate, the probabilistic cutoff policy generally yields a lower end-to-end fidelity than the deterministic cutoff policy. But when the average end-to-end fidelity is required to be above some high threshold value, then the probabilistic cutoff policy can satisfy this requirement with a higher rate than the deterministic cutoff policy. In \cref{section: rate and fidelity} we show an example of a three-node chain for which the rate can be improved by up to a factor of approximately $1.5$.
    \item \textbf{Secret-key rate.} In repeater chains with few nodes or high link generation probability, the probabilistic cutoff policy can provide secret-key rates of the same order of magnitude as the deterministic cutoff policy. In the parameter regime explored in \cref{section:SKR}, the secret-key rates for chains of three, four, and five nodes are at least $0.53$, $0.28$, and $0.15$ times that of the deterministic cutoff policy, respectively.
\end{itemize}

The rest of paper is structured as follows. In \cref{section: model} we provide the details of our repeater chain model, the definition of the probabilistic cutoff policy, and the definition of the deterministic cutoff policy used as a benchmark. In \cref{section: rate and fidelity} we compare the two policies in terms of end-to-end rates and fidelities. We make a comparison in terms of the secret-key rate in \cref{section:SKR}. We end the main text with a discussion in \cref{section: discussion}. The methods that we have used to compute repeater chain rates and fidelities are based on Markov chain models that are explained in detail in \cref{appendix: Markov chain models}.

\begin{figure}
    \centering
    \includegraphics[width=1\linewidth]{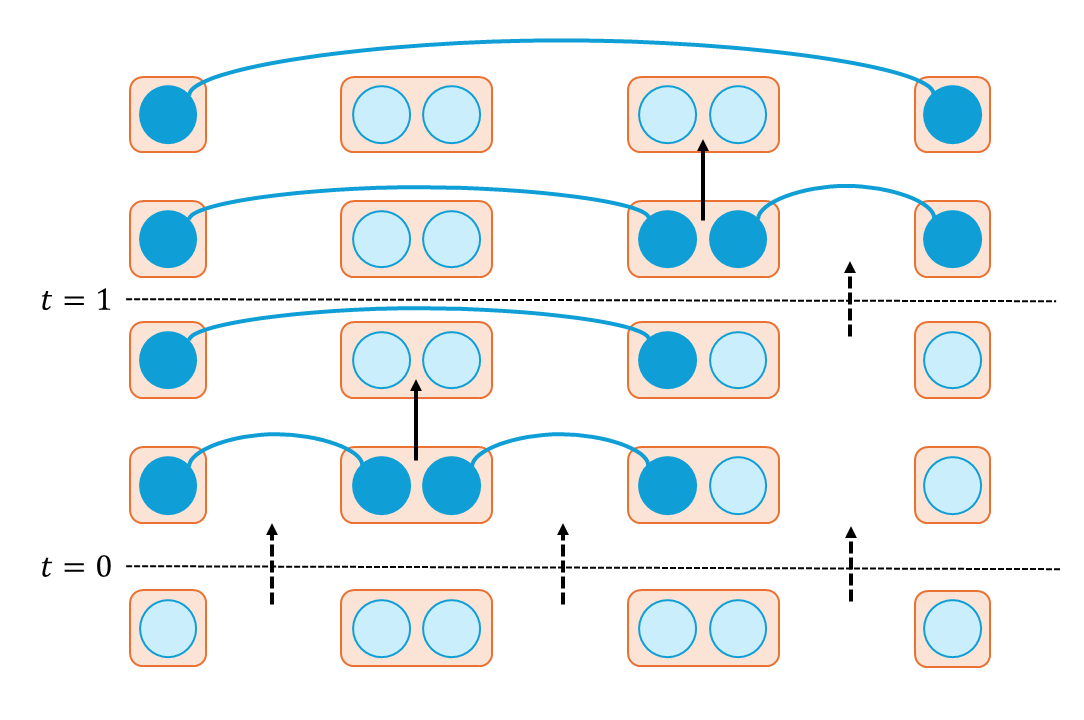}
    \caption{\textbf{Schematic representation of a quantum repeater chain.} A quantum repeater chain generates an entangled link between two end nodes by using heralded entanglement generation (HEG) and entanglement swaps. Dashed arrows indicate an HEG attempt and solid arrows indicate swaps. Qubits are depicted as blue circles. Qubits that share an entangled link are filled in and connected by an arc.}
    \label{fig: repeater chain schematic}
\end{figure}

\section{Model and cutoff policies}\label{section: model}
In this section, we provide the details of the quantum repeater chain model and define the cutoff policies. \cref{fig: repeater chain schematic} gives a schematic overview of the repeater chain protocol. The model parameters are summarized in \cref{table: parameters}. 

\textbf{The homogeneous quantum repeater chain.} We schematically visualize the repeater chain to be oriented horizontally as in \cref{fig: repeater chain schematic}. Every repeater node has one qubit that can be used to share entanglement with a node to its left and one qubit that can be used for a node to its right. Each of the end nodes has a single qubit that can be used to share entanglement with its neighbouring repeater node. We denote the total number of nodes in the chain, including both repeater nodes and end nodes, by $n_\mathrm{node}$. Other than the distinction between end nodes and repeater nodes, we assume the repeater chain to be homogeneous. This means that HEG between every pair of neighbouring nodes produces the same state with the same probability, every swap succeeds with the same probability and every qubit has the same coherence time. 

\textbf{Time step.} 
The repeater chain evolves in discrete time steps. Every time step consists of three phases. The first phase is for HEG attempts and the second phase for entanglement swaps. The third phase is for the application of the cutoff policy. 

\textbf{Heralded entanglement generation.}
The probability that HEG successfully produces a link between a pair of neighbouring nodes during the first phase of the time step is called the \emph{elementary link generation probability}. It is denoted $p_\mathrm{g}$. Even though the probability that a single HEG attempt succeeds may be on the order of $\sim 10^{-5}$, the probability $p_\mathrm{g}$ can be much higher by doing a batch of many such attempts during the HEG phase of the time step \cite{humphreys2018deterministic}. In the absence of noise, the link produced through HEG can be a maximally entangled state $\ket{\Phi^+}=\left(\ket{00}+\ket{11}\right)/\sqrt{2}$, when the Barrett-Kok protocol is used \cite{barrett2005efficient}, or very close to a maximally entangled state, when the protocol by Cirac et al. \cite{cabrillo1999creation} is used. 

\textbf{Werner states.} Various inefficiencies and noises lead to infidelities. We take this into account by assuming that a successful HEG attempt produces a Werner state \cite{werner1989quantum}.
A Werner state is a maximally entangled state that has been subjected to a depolarizing noise channel. On a two-qubit state $\rho$ a depolarizing noise channel $\mathcal{N}_\lambda$ with \emph{depolarizing parameter} $\lambda$ acts as
\begin{equation}\label{eq: depolarizing channel}
    \mathcal{N}_\lambda(\rho) = \lambda \rho + \frac{1-\lambda}{4}\II_4,
\end{equation}
where $\II_4$ is the four-dimensional identity matrix. 
Depolarizing noise is a worst-case noise model \cite{dur2005standard} in the sense that any noise channel can be made into a depolarizing noise channel by introducing additional noise in a process called twirling \cite{bennett1996mixed, horodecki1999general}.
We will consider Werner states of the form 
\begin{equation}\label{eq:werner state}
    \rho_{w} = w \ketbra{\Phi^+}{\Phi^+}+\frac{1-w}{4}\II_4,
\end{equation}
where $w\in [-1/3,1]$ is the referred to as the \emph{Werner parameter}. The fidelity $F(\rho_w)$ of the Werner state $\rho_w$ to the maximally entangled state $\ket{\Phi^+}$ is expressed in terms of the Werner parameter as $F(\rho_w)=\flatfrac{(1+3w)}{4}$. We denote the Werner parameter of a freshly generated link by $w_0$. 

\textbf{Memory decoherence.} Decoherence of a link stored in the quantum memories is modeled by a depolarizing noise channel with depolarizing parameter $\lambda = e^{-\flatfrac{1}{\tau_\mathrm{coh}}}$, where $\tau_\mathrm{coh}$ is the effective two-qubit coherence time. We express $\tau_\mathrm{coh}$ in units of repeater chain time steps. Note that the action of the depolarizing channel on a Werner state $\rho_w$ is simply to multiply the Werner parameter by the depolarizing parameter \cite{munro2015inside}.

\begin{figure}
    \centering
    \includegraphics[width=0.8\linewidth]{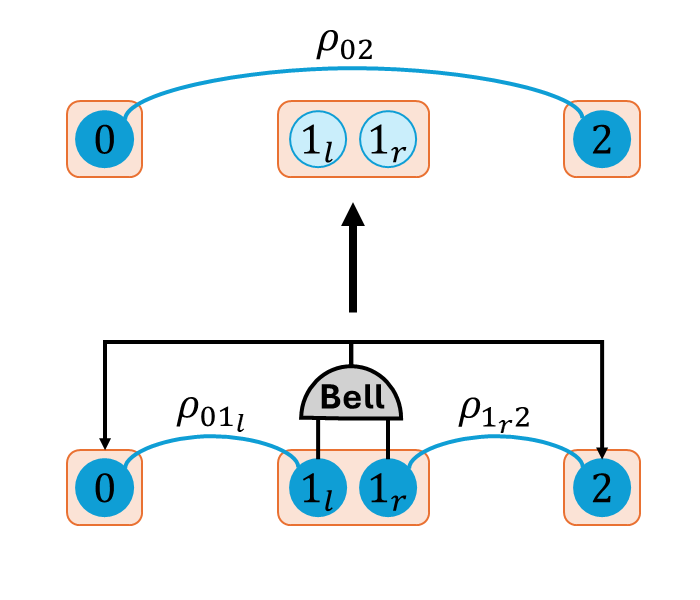}
    \caption{\textbf{An entanglement swap.} The qubits $1_l$ and $1_r$ are measured in the Bell basis. The measurement outcome must be communicated to nodes $0$ and $2$. By local operations at nodes $0$ and $2$ the entangled link $\rho_{02}$ is created.}
    \label{fig: entanglement swap}
\end{figure}

\textbf{Entanglement swaps.}
After the HEG phase follows the swap phase. An entanglement swap \cite{zukowski1993event, duan2001long} of a link $\rho_{01_l}$ between qubit $0$ on node $0$ and qubit $1_l$ on node $1$ and a link $\rho_{1_r2}$ between qubit $1_r$ on node $1$ and qubit $2$ on node $2$, as depicted in \cref{fig: entanglement swap}, proceeds as follows. A Bell state measurement is performed on the two qubits $1_l$ and $1_r$, and the outcome is communicated to nodes $0$ and $2$. Based on the outcome these nodes can then perform local operations that yield the entangled state $\rho_{02}$. If $\rho_{01_l}$ and $\rho_{1_r2}$ are Werner states with Werner parameters $w_{01_l}$ and $w_{1_r2}$, respectively, then the resulting state $\rho_{02}$ between nodes $0$ and $2$ is also a Werner state, and its Werner parameter is $w_{02}=w_{01_l}w_{1_r2}$ \cite{munro2015inside}. 

However, in some physical implementations of the Bell state measurement not all outcomes can be used to perform the required local operations for obtaining the entangled state $\rho_{02}$ \cite{munro2015inside}. Hence, an entanglement swap generally succeeds only probabilistically, with \emph{swap success probability} $p_\mathrm{s}\in (0,1]$. If the swap fails, then the links $\rho_{01_l}$ and $\rho_{1_r2}$ are lost because the measurement on qubits $1_l$ and $1_r$ has collapsed their state. 

\textbf{The age of a link.}
We can now properly introduce the notion of the age of a link. First, we say that elementary links are initially created with age $0$. Second, when the depolarizing channel $\mathcal{N}_{e^{-1/\tau_\mathrm{coh}}}$ is applied to a link, then its age is increased by $1$. An elementary link of age $t$ thus has Werner parameter $w_0 e^{-t/\tau_\mathrm{coh}}$. Finally, when links are swapped, their Werner parameters multiply so that their ages have to be added.

\textbf{Swap-asap policy.}
We assume that the swap-asap policy is used. In this policy, each node performs a swap as soon as both entangled links are available. The swap-asap policy minimizes the number of quantum memories that are needed to store entanglement, which helps to reduce noise due to memory decoherence \cite{kamin2023exact}. However, if swaps succeed with probability $p_\mathrm{s}<1$, then the success probability of simultaneous swaps between $k$ adjacent links is exponentially suppressed as $p_\mathrm{s}^k$. It has been shown in \cite{inesta2023optimal} that, for a particular deterministic cutoff policy, swap-asap performs well with respect to the optimal swapping policy when $p_\mathrm{s}$ is close to $1$. Here we consider $p_\mathrm{s}=1$.

\textbf{Probabilistic cutoff policy.}
If the end-to-end link has not been generated after the swaps, then in the third phase of the time step the remaining links in the chain are subjected to a cutoff policy. Previous work has used deterministic cutoff policies that discard links based on their age \cite{li2021efficient,haldar2023fast,kamin2023exact,inesta2023optimal}. We propose a probabilistic cutoff policy that does not require knowledge about the ages of links. The probabilistic cutoff policy is determined by a single, globally-defined, continuous parameter $p_\mathrm{c}\in [0,1]$, called the \emph{cutoff probability}. Every link left in the chain after the swap phase is discarded with probability $p_\mathrm{c}$.

\textbf{Deterministic cutoff policy.}
In the next sections we will benchmark the probabilistic cutoff policy against a deterministic cutoff policy. There are many possible choices of deterministic cutoff policies.  We consider a deterministic cutoff policy whose structure is most similar to the probabilistic cutoff policy. It is defined by a single, globally-defined, discrete parameter $t_\mathrm{c}\in \{0,1,\dots,\infty\}$, called the \emph{cutoff time}. The deterministic cutoff policy proceeds by discarding every link that remains after the swap phase whose age is greater than or equal to $t_\mathrm{c}$. Note that the probabilistic and deterministic policies defined by $p_\mathrm{c}=1$ and $t_\mathrm{c}=0$ both correspond to never storing any links, while $p_\mathrm{c}=0$ and $t_\mathrm{c}=\infty$ never discard a link. 

\textbf{Additional assumptions.} 
A final assumption that we make for chains of five or more nodes is that HEG attempts are not made between nodes ``under" an already existing link. For example, if there is a link between nodes $0$ and $3$, then we assume that there are no attempts to generate a link between nodes $1$ and $2$. This assumption only has to be made for chains of five or more nodes, because for chains of three or four nodes it is never possible to generate a link ``under" an already existing link. It ensures that after consumption of the end-to-end link the repeater chain returns to the empty state, i.e. the state in which there are no links in the chain. Consequently, all end-to-end links are independent, which will be important for the definitions of the rate and fidelity in the next section. This assumption also simplifies the analysis of the five-node chain. However, in chains with many more nodes this assumption may become unreasonable since the rate may benefit a lot from generating links ``under" already existing links. 



\begin{table*}[t]
\begin{center}
\renewcommand{\arraystretch}{1.5} 
\begin{tabular}{ | c | c | c | }
\hline
Parameter & Domain & Description \\
\hline \hline
$n_\mathrm{node}$ & $\{3,4,5,\dots\}$ & Number of nodes in the chain, including the two-end nodes \\
\hline
$p_\mathrm{g}$ & $(0,1]$ & Success probability of a round of HEG \\
\hline
$\tau_\mathrm{coh}$ & $[0,\infty)$ & Effective two-qubit coherence time of link stored in memory, in units of the time step \\
\hline
$p_\mathrm{c}$ & $[0,1]$ & Cutoff probability \\
\hline
$t_\mathrm{c}$ & $\{0,1,\dots,\infty\}$ & Cutoff time \\
\hline
$w_0$ & $(0,1]$ & Werner parameter of link resulting from HEG\\
\hline
$p_\mathrm{s}$ & $(0,1]$ & Swap success probability\\
\hline
\end{tabular}
\caption{Parameters of the discrete-time, homogeneous quantum repeater chain model.}
\label{table: parameters}
\end{center}
\end{table*}

\section{Rate and Fidelity}\label{section: rate and fidelity}
In this section we study the end-to-end rate and fidelity in quantum repeater chains with probabilistic or deterministic cutoff policy. The rate quantifies how fast end-to-end links are established, while the fidelity is a measure for their quality. The precise definitions within the model of \cref{section: model} are provided below.

Let us first define the \textit{delivery time} $T$ to be the number of time steps needed to generate the end-to-end link when starting from an empty repeater chain. We denote the expected delivery time as $\overline{T}:= \EE[T]$. The \textit{rate} $R$ is then defined to be the inverse of the expected delivery time. That is,
\begin{equation}\label{eq: definition rate}
    R := \overline{T}^{-1}.
\end{equation}
Since the repeater chain returns to the empty state after every end-to-end link it generates, $\overline{T}$ gives the expected time between the production of two end-to-end links. The rate thus equals the frequency at which end-to-end links are produced on average.

The age of the end-to-end link is also a random variable. In the probabilistic cutoff policy the age of the end-to-end link is not known because link ages are not tracked. The quantum state $\rho_\mathrm{e2e}$ that describes the end-to-end link is therefore a mixed state over all possible ages of the end-to-end link. If the end-to-end link has age $t$ with probability $p_t$, then 
\begin{equation}\label{eq: rho e2e}
    \rho_\mathrm{e2e}=\sum_{t\geq 0}p_t\rho_{w_t} = \rho_{\overline{w}},
\end{equation}
where $\rho_{w_t}$ is the Werner state corresponding to an end-to-end link of age $t$ and the second equality follows from the general form of the Werner state in \cref{eq:werner state}.
In other words, the end-to-end link is itself a Werner state, and its Werner parameter is the expected Werner parameter $\overline{w} := \sum_{t\geq 0} p_t w_t$ over all possible ages. The fidelity $F$ is defined to be the fidelity of $\rho_\mathrm{e2e}$ to the maximally entangled state $\ket{\Phi^+}$. In terms of the expected Werner parameter $\overline{w}$ it is given by 
\begin{equation}\label{eq: definition fidelity}
    F=\frac{1+3\overline{w}}{4}.
\end{equation}

In the deterministic cutoff policy the age of the end-to-end link is known. The end-to-end link is then one of the Werner states $\rho_{w_t}$, instead of the mixture $\rho_{\mathrm{e2e}}$ over all the $\rho_{w_t}$. However, the fidelity of the mixed state $\rho_\mathrm{e2e}$ can be interpreted as the \textit{average} fidelity of the end-to-end link and it will be used as the single fidelity metric to describe the end-to-end link in the deterministic cutoff policy. This allows for a direct comparison to the end-to-end link in the probabilistic cutoff policy.

Using Markov chain theory \cite[Theorems 1.42 and 1.43]{serfozo2009basics}, the expected delivery time and expected Werner parameter can be computed exactly in chains of three, four and five nodes. We show in \cref{appendix: Markov chain models} that for the deterministic cutoff policy the rate and expected Werner parameter in a three-node chain can be expressed analytically as
\begin{equation}\label{eq: analytic rate}
    R = p_\mathrm{g}^2p_\mathrm{s} \frac{1+2A_1(t_\mathrm{c})}{1+2p_\mathrm{g}A_1(t_\mathrm{c})},
\end{equation}
and
\begin{equation}\label{eq: analytic werner}
    \overline{w} = w_0^2 \frac{1+2A_\lambda(t_\mathrm{c})}{1+2A_1(t_\mathrm{c})},
\end{equation}
where 
\begin{equation}
    A_\lambda(t_\mathrm{c}) = [1-(1-p_\mathrm{g})^{t_\mathrm{c}}\lambda^{t_\mathrm{c}}]\frac{(1-p_\mathrm{g})\lambda }{1-(1-p_\mathrm{g})\lambda}.
\end{equation}
The subscript $\lambda$ refers to the depolarizing parameter $\lambda = e^{-1/\tau_\mathrm{coh}}$ of the quantum memories. Moreover, we show that replacing $A_\lambda(t_\mathrm{c})$ by 
\begin{equation}
    B_\lambda(p_\mathrm{c}) =  \frac{(1-p_\mathrm{g})(1-p_\mathrm{c})\lambda}{1-(1-p_\mathrm{g})(1-p_\mathrm{c})\lambda}
\end{equation}
in \cref{eq: analytic rate} and \cref{eq: analytic werner} gives the rate and expected Werner parameter for the probabilistic cutoff policy in a three-node chain.

For four- and five-node chains we have not found closed-form analytic expressions. Instead, we have linear systems that can be solved exactly on a computer, and from whose solutions the expected delivery time and expected Werner parameter can be derived. For the deterministic cutoff policy we have adapted the Markov chain methods of \cite{shchukin2019waiting}. For the probabilistic cutoff policy we have developed a non-trivial extension of these Markov chain methods. This is presented in detail in \cref{appendix: Markov chain models}. 

We compare the rates and fidelities that are achieved for different values of $p_\mathrm{c}$ and $t_\mathrm{c}$ in \cref{fig: rate fidelity curve}. The rate-fidelity curves displayed there correspond to a particular choice of parameters, but different choices lead to similar results (see \cref{appendix: RF supplement}). In particular, it can be seen that the probabilistic cutoff policy yields lower fidelity than the deterministic cutoff policy when compared at the same rate. This result is consistent with the intuition that in the probabilistic cutoff policy we have given up strict control of the fidelity: Links are allowed to become arbitrarily old, while in the deterministic cutoff policy there is a finite maximum age. The ensemble of end-to-end links that constitute $\rho_\mathrm{e2e}$ for some particular value of the rate should thus have a higher expected age in the case of the probabilistic cutoff policy. A proof of this fact for three-node chains is given in \cref{appendix: RF supplement} from the analytic expressions of $R$ and $\overline{w}$.

\begin{figure}
    \centering
    \includegraphics[width=0.9\linewidth]{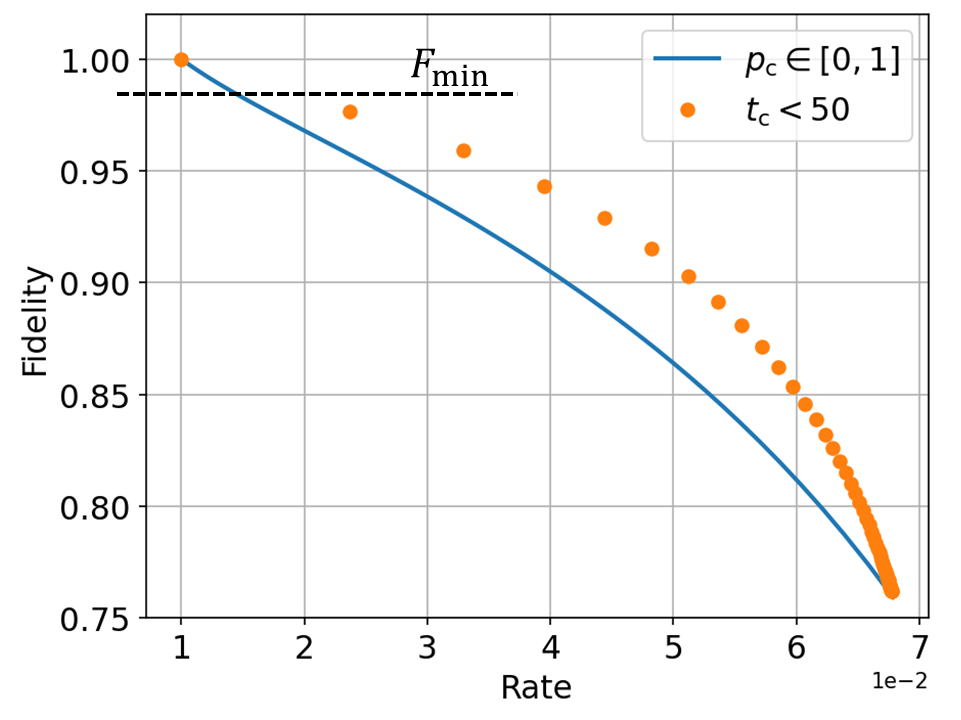}
    \caption{\textbf{Deterministic cutoffs provide higher fidelities than probabilistic cutoffs.} Rates and fidelities of quantum repeater chain with probabilistic or deterministic cutoff policy. Results shown are for parameter values $n_\mathrm{node}=3$, $p_\mathrm{g}=0.1$,  $\tau_\mathrm{coh}=20$, $w_0=1$ and $p_\mathrm{s}=1$. Horizontal dashed line indicates comparison of rates above minimum threshold fidelity $F_{\mathrm{min}}$.}
    \label{fig: rate fidelity curve}
\end{figure}

Applications of a quantum network may require the end-to-end link to have fidelity above some minimum threshold $F_\mathrm{min}$ \cite{dahlberg2019link,kozlowski2020designing}. In some special cases, the probabilistic cutoff policy can satisfy such a requirement at a higher rate than the deterministic cutoff policy. In particular, it does so whenever the minimum threshold fidelity is so high that the deterministic cutoff policy can only satisfy it with $t_\mathrm{c}=0$. We have indicated the situation in \cref{fig: rate fidelity curve}. The probabilistic cutoff policy provides at least the same rate as the deterministic cutoff policy by setting $p_\mathrm{c}=1$. Subsequently, the cutoff probability can be continuously decreased until the fidelity exactly coincides with the threshold. In the example of \cref{fig: rate fidelity curve}, this improves the rate of the probabilistic cutoff policy over what is possible with the deterministic cutoff policy by a factor of approximately $1.5$. 

The above advantage of the probabilistic cutoff policy crucially relies on the fact that the cutoff probability is continuous, while the cutoff time is discrete. Moreover, the continuity is only sure to yield an advantage in the case that $t_\mathrm{c}=0$ is the only feasible cutoff time. When a positive cutoff time $t_\mathrm{c}>0$ is feasible the rate of $t_\mathrm{c}$ can only be matched by the probabilistic cutoff policy at a lower fidelity, which in general may not be above the threshold.  

\section{Secret-Key Rate}\label{section:SKR}
The secret-key rate (SKR) combines the rate and fidelity into a single number with an operational interpretation. Namely, the number of bits of secret key per unit of time that two parties produce in a quantum key distribution protocol, such as entanglement based BB84 \cite{bennett2014quantum, ekert1991quantum,bennett1992quantum}. In this section we compare the performance of the probabilistic and deterministic cutoff policies in terms of the secret-key rate.

For a quantum repeater chain that produces end-to-end links with expected Werner parameter $\overline{w}$ at rate $R$ to run a version of the BB84 quantum key distribution protocol, the secret-key rate is given by
\begin{equation}\label{eq:skr definition}
    \SKR = R\cdot \SKF(\overline{w}),
\end{equation}
where
\begin{equation}\label{eq:skf definition}
    \SKF(\overline{w}) = \max\left\{1-2\,h\left(\frac{1-\overline{w}}{2}\right) ,0 \right\}
\end{equation}
is the so-called secret-key fraction and $h(x)=-x\log_2 x -(1-x)\log_2(1-x)$ is the binary entropy \cite{renner2008security, gottesman2004security}. The secret-key fraction is the number of bits of secret-key that can be extracted per end-to-end link. It is zero as long as $\overline{w}\lesssim 0.78$, and then increases monotonically to become one at $\overline{w}=1$. We consider $w_0=1$ throughout this section. For fixed hardware parameters $n_\mathrm{node}$, $p_\mathrm{g}$, $p_\mathrm{s}$ and $\tau_\mathrm{coh}$, the secret-key rate can be maximized over the cutoff probability $p_\mathrm{c}$ of the probabilistic cutoff policy or cutoff time $t_\mathrm{c}$ of the deterministic cutoff policy. We denote the results by $\max_{p_\mathrm{c}} \SKR$ and $\max_{t_\mathrm{c}} \SKR$, respectively. 

The maximized secret-key rates for the two cutoff policies in chains of three to five nodes with deterministic swaps are compared in \cref{fig: skr pg} as a function of the elementary link generation probability. It can be seen that the difference between the probabilistic cutoff policy and deterministic cutoff policy increases as $p_\mathrm{g}$ decreases. At the smallest value considered, $p_\mathrm{g}=10^{-3}$, the ratios $\flatfrac{\max_{p_\mathrm{c}}\SKR}{\max_{t_\mathrm{c}}\SKR}$ are largest. They are $0.53$ for $n_\mathrm{node}=3$, $0.28$ for $n_\mathrm{node}=4$ and $0.15$ for $n_\mathrm{node}=5$. As we discuss in \cref{appendix: SKR}, the coherence time has been chosen to make these ratios as small as possible (see \cref{fig: skr tau_coh ratios}), in the parameter regime explored. We would not expect them to dramatically decrease when $p_\mathrm{g}$ is decreased further, since they already exhibit convergence around $p_\mathrm{g}=10^{-3}$ (see \cref{fig: skr pg ratios}). Hence, these ratios give an indication of the worst-case secret-key rate loss of the probabilistic cutoff policy with respect to the deterministic cutoff policy. 

Even though the absolute values of the secret-key are sensitive to the swap success probability, the ratio of the maximized secret-key rates for $n_\mathrm{node}=3$ is completely independent of the swap success probability. This is because $p_\mathrm{s}$ only appears as an overall scaling of the rate in \cref{eq: analytic rate}. For $n_\mathrm{node}=4$ and $n_\mathrm{node}=5$ we have found numerically that the ratio of maximized secret-key rates is still largely insensitive to the swap success probability. For example, probabilistic swaps with $p_\mathrm{s}=0.5$ yield the same ratios of the maximized secret-key rates for $p_\mathrm{g}=10^{-3}$ and $\tau_\mathrm{coh}=50$ as listed above for deterministic swaps.

\begin{figure}
    \centering
    \includegraphics[width=0.9\linewidth]{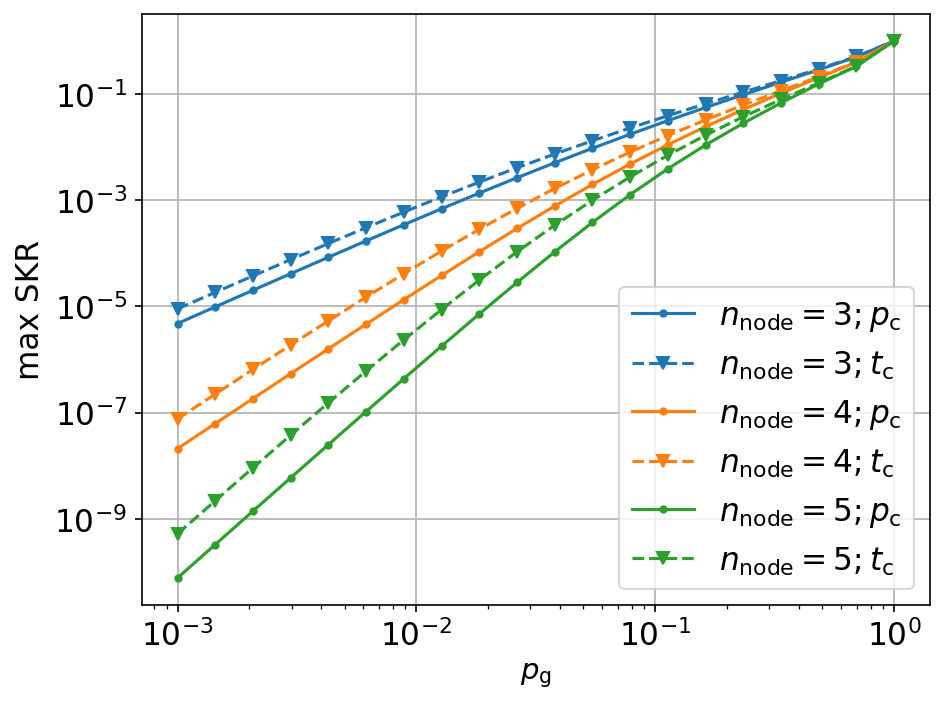}
    \caption{\textbf{Probabilistic cutoffs provide secret-key rates of the same order of magnitude as deterministic cutoffs over a large range of link generation probabilities when $n_\mathrm{node}\leq 5$. }Maximized secret-key rates $\max_{p_\mathrm{c}}\SKR$ and $\max_{t_\mathrm{c}}\SKR$ for probabilistic ($p_\mathrm{c}$) and deterministic ($t_\mathrm{c}$) cutoff policy, respectively, are shown as function of elementary link generation probability $p_\mathrm{g}\in [10^{-3},1]$ for $\tau_\mathrm{coh}=50$ and $p_\mathrm{s}=1$.}
    \label{fig: skr pg}
\end{figure}

Consider now the regime of large elementary link generation probability. As can be seen in \cref{fig: skr pg} (or \cref{fig: skr pg ratios}) the ratio between between the maximized secret-key rates of the two cutoff policies in that regime is much closer to $1$ than in the worst case. We use a Monte Carlo simulation to show how the secret-key rates compare in this regime when the number of nodes is increased beyond five. We restrict to the case of deterministic swaps since our Monte Carlo simulation method becomes inefficient for probabilistic swaps, and we have also seen above that the comparison of the two cutoff policies is largely independent of the swap success probability, at least up to five node chains.

We show the maximized secret-key rates from the probabilistic cutoff policy and deterministic cutoff policy in chains up to ten nodes for $p_\mathrm{g}=0.25$ and $\tau_\mathrm{coh}=50$ in \cref{fig: skr n}. We also show the secret-key rate that would have been achieved with a trivial cutoff policy. The two trivial policies are to either never discard a link ($p_\mathrm{c}=0$) or to never store a link ($p_\mathrm{c}=1$). It turns out that, in the case shown, the better trivial cutoff policy for $n_\mathrm{node}\leq 6$ is to never discard any links. For $n_\mathrm{node}>6$, however, that policy would yield zero secret-key rate because the expected Werner parameter drops below $0.78$. The only possible trivial cutoff policy for $n_\mathrm{node}>6$ is therefore to discard every link. It can be seen in \cref{fig: skr n} that this causes the secret-key rate of the trivial cutoff policy to become orders of magnitude lower than that of the deterministic cutoff policy. The probabilistic cutoff policy, on the other hand, continues to yield secret-key rates of the same order of magnitude as the deterministic cutoff policy, even for $n_\mathrm{node}>6$. 

For a larger elementary link generation probability $p_\mathrm{g}=0.5$ and the same coherence time $\tau_\mathrm{coh}=50$ we have found that the optimal cutoff probability is $p_\mathrm{c}=0$ for all $n_\mathrm{node}\leq 10$. In that case the optimal probabilistic cutoff policy is thus trivial. For a lower coherence time of $\tau_\mathrm{coh}=20$ the optimal probabilistic cutoff policy for $p_\mathrm{g}=0.5$ is non-trivial again. It provides similar secret-key rates to the deterministic cutoff policy for all $n_\mathrm{node}\leq 10$, but the improvement in secret-key rate over the best trivial policy is more modest than what was observed in \cref{fig: skr n}. 

\begin{figure}
    \centering
    \includegraphics[width=0.9\linewidth]{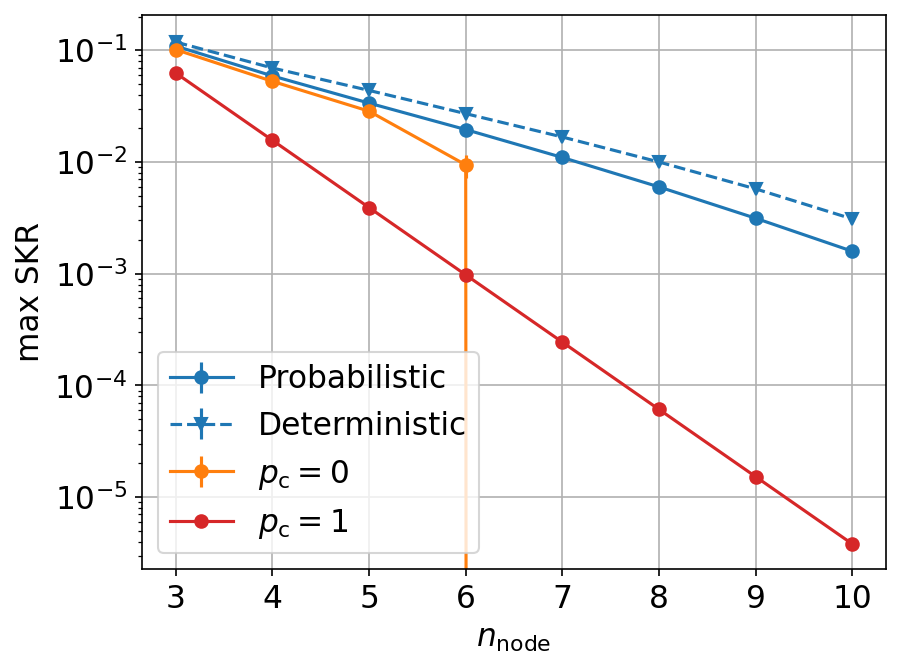}
    \caption{\textbf{Probabilistic cutoffs provide orders of magnitude higher secret-key rate than trivial cutoff policies, and are competitive with deterministic cutoffs when link generation probability is large.} Maximized secret-key rates $\max_{p_\mathrm{c}}\SKR$ and $\max_{t_\mathrm{c}}\SKR$ for probabilistic and deterministic cutoff policy, respectively, are shown as function of $n_\mathrm{node}=3,4,\dots,10$ for chains with $p_\mathrm{g}=0.25$, $\tau_\mathrm{coh}=50$ and $p_\mathrm{s}=1$. The secret-key rates of the trivial policies to never discard a link ($p_\mathrm{c}=0$) or never store a link $(p_\mathrm{c}=1$) are also shown. No secret-key can be generated with $p_\mathrm{c}=0$ for $n_\mathrm{node}>6$. Errorbars show one standard deviation, but almost all are smaller than marker (see \cref{appendix: SKR} for details).}
    \label{fig: skr n}
\end{figure}

\section{Discussion}\label{section: discussion}
In this paper we have introduced the probabilistic cutoff policy for quantum repeater chains because it does not require tracking link ages, contrary to previously studied deterministic cutoff policies. We have shown that even though the probabilistic cutoff policy keeps track of less state, it can still provide comparable performance to a deterministic cutoff policy, especially when chains are short or have high link generation probability.

In our present model every node has only one qubit to share entanglement with each of its neighbours, so that there can be at most one link between a pair of nodes. On the other hand, in a multiplexed repeater chain the nodes will support many qubits and more than one link can be present between a pair of nodes \cite{lee2022quantum}. It could thus be especially advantageous if the ages of all these links do not have to be tracked. Moreover, the probability that at least one link is generated between neighbouring nodes in a multiplexed chain is expected to be high. According to our results, this is the regime in which the probabilistic cutoff policy is most competitive with the deterministic cutoff policy.

Finally, note that the deterministic cutoff policy which we have used as a benchmark has been chosen for its similarity to the probabilistic cutoff policy, but other deterministic cutoff policies are possible. In \cref{appendix: other cutoff policies} we discuss a variation on the deterministic cutoff policy which consists in adding a post-selection: end-to-end links with age greater than $t_\mathrm{c}$ are discarded. We show in the appendix that this post-selection can lead to unexpectedly low rates for the deterministic cutoff policy in certain parameter regimes. Adding global coordination between the nodes could prevent these low rates, but may come at a higher cost in classical communication. However, even for the probabilistic and deterministic cutoff policies considered in this work it is not obvious what the effects of introducing classical communication into the model will be. Both policies require classical communication to herald entanglement generation and entanglement swaps. The deterministic cutoff policy requires classical communication to update link ages. The probabilistic cutoff policy does not require this classical communication of link ages, but it does require heralding cutoffs. We leave the extension of the model to account for classical communication to future work.

\section*{Code and data availability}
Our code and data can be found in the following GitLab repository: \url{https://gitlab.tudelft.nl/wehner-research/probabilistic_cutoffs_in_homogeneous_quantum_repeater_chains}

\printbibliography

\section*{Acknowledgements}
J.G. and S.W. acknowledge funding from an NWO Vici grant. A.G.I. acknowledges financial support from the Netherlands Organisation for Scientific Research (NWO/OCW), as part of the Frontiers of Nanoscience program. We thank Sounak Kar and Francisco Ferreira Da Silva for discussions and feedback.

\section*{Author Contributions}
S.W. and A.G.I. proposed to consider probabilistic cutoffs for quantum repeater chains. J.G. analyzed the system and was the main writer of this paper. The Monte Carlo simulation has been written in collaboration with S.H.. A.G.I. and S.W. provided active feedback at every stage of the project.

\begin{appendices}
\counterwithin{figure}{section}
\section{Markov chain models}\label{appendix: Markov chain models}
In this appendix we present the exact methods used for the computation of the rate and fidelity in repeater chains of three, four, and five nodes with probabilistic or deterministic cutoffs, as defined in \cref{section: model}. We show how to formulate the discrete-time repeater chain model as a discrete-time Markov chain model. The states of the Markov chain correspond to the different configurations of links in the repeater chain. Using Markov chain theory the computation of the expected delivery time and expected Werner parameter, and hence of the rate and fidelity, is reduced to solving linear systems of equations. In the three-node case these systems can be solved analytically, while for four and five nodes we use numerical methods. The Markov chain model we use for deterministic cutoffs has been adapted from \cite{shchukin2019waiting}, while a new model is developed to handle the case of probabilistic cutoffs. 
All computations have been implemented in Python and are available at \cite{Grimbergen2026}.

\subsection{Deterministic cutoffs}
In this section we present a discrete-time Markov chain model for quantum repeater chains with deterministic cutoffs. The use of such a Markov chain model for the computation of the expected delivery time was originally proposed in \cite{shchukin2019waiting}. We adapt the model of \cite{shchukin2019waiting} so that the expected Werner parameter can be computed as well. 

Recall that a Markov chain $(\mathcal{S}, P)$ consists of a state space $\mathcal{S}$ and a transition probability matrix $P$. The states of the Markov chain that models the quantum repeater chain describe the entangled links in the chain and their ages. The transition probability matrix gives the probabilities of states to evolve to other states during one repeater chain time step of HEG attempts, swaps and cutoffs. 

If the repeater chain is pictured as a linear graph so that each pair of neighbouring nodes is connected by an edge, then for a repeater chain with $n_\mathrm{node}$ nodes there are $n=n_\mathrm{node}-1$ edges. We refer to these edges as the \emph{segments} of the repeater chain. In a quantum repeater chain with $n=n_\mathrm{node}-1$ segments we label the states by tuples $\tvec=(t_0,\dots,t_{n-1})$. We use $t_i\geq 0$ to indicate that there is a link on segment $i$ with age $t_i$. If there is no link on segment $i$ we set $t_i=-1$. If adjacent links have been swapped to form an entangled link that spans multiple segments (i.e. that connects non-neighbouring nodes), then we record the age of this entangled link on its left most segment. The ages of all other segments of the entangled link are set to $0$. For example, if link $i$ and $i+1$ have been swapped to form an entangled link and the chain is otherwise empty, then we denote this by $$(-1,\dots,-1,t_i,0,-1,\dots,-1),$$ where $t_i$ is the age of the entangled link spanning segments $i$ and $i+1$. 

The element $P_{\tvec,\tvec'}$ of the transition probability matrix is equal to the probability that the repeater chain transitions to state $\tvec'$ in a single time step starting from state $\tvec$. The nonzero elements of $P$ can be found by considering the possible evolutions of a state through a single repeater chain time step. First, during the HEG phase, there is an HEG attempt for every $t_i=-1$ in $\tvec=(t_0,\dots,t_{n-1})$ that succeeds with probability $p_\mathrm{g}$. Then, for each possible outcome of the HEG attempts swaps are done according to swap-asap during the swap phase. Finally, for each outcome of the swap phase the cutoffs must be applied to find the state $\tvec'$ that is reached at the end of the time step. As an example, we provide the details of this procedure for the three-node chain at the end of this section.

We partition the state space $\mathcal{S}$ into transient and absorbing states,
\begin{equation}
    \mathcal{S} = \mathcal{S}_{\mathrm{transient}}\cup \mathcal{S}_{\mathrm{absorbing}}.
\end{equation}
The transient states are all states in which there is no end-to-end link and the absorbing states are the states in which there is an end-to-end link. Since we assume swap-asap the transient states are labeled by tuples $\tvec=(t_0,\dots, t_{n-1})$ in which $t_i=-1$ for at least one value of $i$, while the absorbing states are labeled by tuples $\tvec=(t,0,\dots,0)$, where $t$ is the age of the end-to-end link. 

\paragraph{Expected delivery time}
The expected delivery time $\overline{T}$ of producing the end-to-end link when starting from the empty repeater chain can be computed from the Markov chain model as the expected hitting time of $\mathcal{S}_\mathrm{absorbing}$ starting from the empty repeater chain state $\boldsymbol{-1}:=(-1,\dots,-1)$. In general, if $v_{\tvec}$ is the expected hitting time of $\mathcal{S}_\mathrm{absorbing}$ from any $\tvec \in \mathcal{S}_\mathrm{transient}$, then \cite[Theorem 1.43]{serfozo2009basics}
\begin{align}
    v_{\tvec} =1+\sum_{\tvec' \in \mathcal{S}_\mathrm{transient}} P_{\tvec,\tvec'}v_{\tvec'}. \label{eq: linear system delivery time deterministic}
\end{align}
The linear system in \cref{eq: linear system delivery time deterministic} can be solved for all $v_{\tvec}$ with $\tvec\in \mathcal{S}_\mathrm{transient}$. The expected delivery time $\overline{T}$ of an end-to-end link starting from the empty chain is $v_{\boldsymbol{-1}}$.

\paragraph{Expected Werner parameter}
In the computation of the expected delivery time we compute the expected time to hit \emph{any} end-to-end state in $\mathcal{S}_\mathrm{absorbing}$. To compute the expected Werner parameter we must compute the probability to end up in \emph{a particular} end-to-end state in $\mathcal{S}_\mathrm{absorbing}$, because the Werner parameter of each end-to-end state is different. Let $\gamma_{\tvec,\hat\tvec}$ denote the probability that the Markov chain eventually ends up in the absorbing state $\hat\tvec\in \mathcal{S}_\mathrm{absorbing}$ when starting from $\tvec\in \mathcal{S}_\mathrm{transient}$, then \cite[Theorem 1.42]{serfozo2009basics}
\begin{align}
    \gamma_{\tvec,\hat\tvec} &= P_{\tvec,\hat\tvec} + \sum_{\tvec'\in \mathcal{S}_\mathrm{transient}} P_{\tvec,\tvec'}\gamma_{\tvec',\hat\tvec}.\label{eq: linear system werner parameter deterministic}
\end{align}
For each absorbing state $\hat\tvec\in \mathcal{S}_\mathrm{absorbing}$ the linear system in \cref{eq: linear system werner parameter deterministic} can be solved for all $\gamma_{\tvec,\hat\tvec}$ with $\tvec\in \mathcal{S}_\mathrm{transient}$ and the probability to end up in $\hat \tvec$ starting from the empty chain is recovered as $\gamma_{\boldsymbol{-1},\hat\tvec}$. The Werner parameter of the end-to-end link in state $\hat \tvec=(t,0,\dots,0)$ is given by $w(\hat\tvec) = w_0^{n}\lambda^t$, where $\lambda = e^{-1/\tau_\mathrm{coh}}$ is the depolarizing parameter, $w_0$ the Werner parameter of a freshly generated link, and $n=n_\mathrm{node}-1$ is the number of segments in the chain. The expected Werner parameter starting from the empty chain can then be computed as
\begin{align}
    \overline{w} &= \sum_{\hat \tvec \in \mathcal{S}_\mathrm{absorbing}} \gamma_{\boldsymbol{-1},\hat\tvec}w(\hat\tvec).\label{eq: werner deterministic}
\end{align}

The expected delivery time and Werner parameter can thus be computed by solving the linear systems \cref{eq: linear system delivery time deterministic} and \cref{eq: linear system werner parameter deterministic}, respectively. For chains of four and five nodes we have implemented an algorithm that builds the state space and the transition probability matrix, and then solves for the expected delivery time and expected Werner parameter in \cite{Grimbergen2026}. The linear systems are solved using the Python function \texttt{numpy.linalg.solve}. However, the size of these linear systems scales as $t_\mathrm{c}^2$, where $t_\mathrm{c}$ is the cutoff time. This is because for four and five node chains there can be two disjoint links in the chain whose ages range from $0$ to $t_\mathrm{c}$. For large values of the cutoff time computations thus become infeasible. Throughout this work we restrict to $t_\mathrm{c}<30$ in the case of four nodes, and to $t_\mathrm{c}<20$ in the case of five nodes. Part of the reason we have not considered more than five nodes is that the range of cutoff times would then have to be restricted even further. The other reason is that our methods for probabilistic cutoffs do not straightforwardly generalize to more than five nodes, as we discuss in the next section.

\paragraph{Three-node chains} For three-node chains the Markov chain model for quantum repeater chains with deterministic cutoffs is sufficiently simple so that we can analytically solve for the expected delivery time and expected Werner parameter. We will first simplify the notation for the state space by exploiting the symmetry of the chain. According to the general recipe above we should build the state space out of tuples $\tvec=(t_0,t_1)$. For cutoff time $t_\mathrm{c}$ the transient state space would be 
\begin{multline}
       \mathcal{S}_\mathrm{transient} = \{(-1,-1), (0,-1), (-1,0),\\ (1,-1), (-1,1),\dots,(t_\mathrm{c}-1,-1),(-1,t_\mathrm{c}-1)\}
\end{multline}
and the absorbing state space would be
\begin{equation}
    \mathcal{S}_\mathrm{absorbing} = \{(0,0), (1,0),\dots,(t_\mathrm{c},0)\}.
\end{equation}
Since the chain is homogeneous the states $(t,-1)$ and $(-1,t)$ are physically equivalent. They both describe a repeater chain with a link of age $t$ on one of the two segments. We can thus define the equivalent state spaces 
\begin{align}
    \mathcal{S}_\mathrm{transient} &= \{-1,0,1,\dots,t_\mathrm{c}-1\} \\
    \mathcal{S}_\mathrm{absorbing} &= \{\hat 0,\hat 1,\dots, \hat t_\mathrm{c}\},
\end{align}
where we have made the identifications $-1\equiv (-1,-1)$, $t\equiv (t,-1)\equiv (-1,t)$, and $\hat t \equiv (t,0)$. To be clear, $-1$ denotes the empty state, $t\geq 0$ denotes a link of age $t$ on either of the two segments, and the notations $\hat t$ with $t\geq 0$ denotes an end-to-end link with age $t$.

To compute the expected delivery time and expected Werner parameter let us consider the cases $t_\mathrm{c}=0$ and $t_\mathrm{c}>0$ separately.
For $t_\mathrm{c}=0$ the transient state space becomes $\mathcal{S}_\mathrm{transient}=\{-1\}$ and the absorbing state space is $\mathcal{S}_\mathrm{absorbing}=\{\hat 0\}$. The probability to transition from state $-1$ to state $\hat 0$ is
\begin{equation}
    P_{-1,\hat 0} = p_\mathrm{g}^2 p_\mathrm{s},
\end{equation}
because the end-to-end link is produced if and only if a link is generated in both segments, which has probability $p_\mathrm{g}^2$, and the subsequent swap is successful, which has probability $p_\mathrm{s}$. The probability to return to the empty state is
\begin{equation}
    P_{-1,-1}=1-P_{-1,\hat 0} = 1-p_\mathrm{g}^2p_\mathrm{s}.
\end{equation}
There are no other transition probabilities.
The linear system \cref{eq: linear system delivery time deterministic} for the hitting time $v_{-1}$ becomes
\begin{equation}
    v_{-1} = 1 + P_{-1,-1} v_{-1},
\end{equation}
which is solved for $v_{-1} = (1-P_{-1,-1})^{-1}=P_{-1,\hat 0}^{-1}=(p_\mathrm{g}^2p_\mathrm{s})^{-1}$. Hence, for $t_\mathrm{c}=0$ the expected delivery time in a three-node chain is $\overline{T} = (p_\mathrm{g}^2p_\mathrm{s})^{-1}$. Since $\{\hat 0\}$ is the only absorbing state the expected Werner parameter of the end-to-end link is simply the Werner parameter of state $\{\hat{0}\}$, which is $w(\hat 0) = w_0^2$.

Let us now turn to the case of $t_\mathrm{c}>0$. As we explain below, the nonzero transition probabilities out of state $-1$ are given by
\begin{align*}
    P_{-1,\hat{0}} &= p_\mathrm{g}^2 p_\mathrm{s} \\
    P_{-1,0} &= 2p_\mathrm{g}(1-p_\mathrm{g}) \\
    P_{-1,-1} &=  (1-p_\mathrm{g})^2 + p_\mathrm{g}^2(1-p_\mathrm{s}).
\end{align*}
Again, $P_{-1,\hat{0}}$ is the probability to generate an end-to-end link immediately from the empty repeater chain state. The probability $P_{-1,0}$ is the probability to generate a single link, either on the left or the right segment of the chain. Finally, $P_{-1,-1}$ is the probability to return to the empty chain, which can happen either when no links are generated, or when both links are generated and the swap between them fails.

If the chain is in state $t$ for $t\in \{0,\dots, t_\mathrm{c}-2\}$, then there is one link present of age $t$ and there are three possible outcomes of the time step. First, if the second link is not generated, then the link already present ages by one. Second, if the second link is generated and the swap succeeds, then an end-to-end link of age $t+1$ is produced. Finally, if the second link is generated but the swap fails, then the chain returns to the empty state. This leads to the transition probabilities 
\begin{align*}
    P_{t,t+1} &= 1-p_\mathrm{g} \\
    P_{t,\widehat{t+1}} &= p_\mathrm{g}p_\mathrm{s} \\
    P_{t,-1} &= p_\mathrm{g}(1-p_\mathrm{s})
\end{align*}
for $t\in \{0,\dots,t_\mathrm{c}-2\}$.
Finally, the nonzero transition probabilities from state $t_\mathrm{c}-1$ are given by
\begin{align*}
    P_{t_\mathrm{c}-1,\hat t_\mathrm{c}} &= p_\mathrm{g}p_\mathrm{s} \\
    P_{t_\mathrm{c}-1,-1} &= (1-p_\mathrm{g}) + p_\mathrm{g}(1-p_\mathrm{s}).
\end{align*}
The difference with the case $t\in\{0,\dots,t_\mathrm{c}-2\}$ is that for $t=t_\mathrm{c}-1$, the active link must be discarded when the second link is not generated. The Markov chain for $t_\mathrm{c}>0$ is depicted in \cref{fig: Markov chain}.

The expected delivery time of the end-to-end link can now be computed by solving the system \cref{eq: linear system delivery time deterministic} for $v_{-1}$. From \cref{eq: linear system delivery time deterministic} we have 
\begin{equation*}
    v_{t} =
    \begin{cases}
    1 + P_{-1,-1}v_{-1} + P_{-1,0}v_0  &\mathrm{if\,}t=-1 \\
    1+ P_{t,-1} v_{-1} + P_{t,t+1}v_{t+1}&\mathrm{if\,}t\in \{0,\dots, t_\mathrm{c}-2\} \\
    1 + P_{t_\mathrm{c}-1,-1}v_{-1} &\mathrm{if\,}t=t_\mathrm{c}-1
    \end{cases}.
\end{equation*}
An equation for $v_{-1}$ can be obtained by starting from the expression for $v_{-1}$ and recursively substituting $v_{t+1}$ into $v_t$. The result is
\begin{align*}
    v_{-1} &= 1+P_{-1, -1}v_{-1}+P_{{-1},0} v_{0}\\
    &= 1+P_{-1,0}+(P_{-1,-1}+P_{-1,0}P_{0,-1})v_{-1}\\
    &\qquad\qquad+P_{{-1},0}P_{0,1} v_{1}
    \\
    &= \dots \\
    &= 1+\sum_{t=0}^{t_\mathrm{c}-1}\prod_{t'=0}^tP_{t'-1,t'} \\
    &\qquad\qquad+\left( P_{-1,-1}+\sum_{t=0}^{t_\mathrm{c}-1}P_{t,-1}\prod_{t'=0}^tP_{t'-1,t'}\right)v_{-1}.
\end{align*}
The final equation can be solved for $v_{-1}$. After substituting the transition probabilities, and recalling that $\overline{T}=v_{-1}$, we find that the expected delivery time is given by
\begin{align}\label{eq: delivery time three node deterministic}
        \overline{T} 
        &= \frac{1}{p_\mathrm{g}^2p_\mathrm{s}}\left(\frac{1+2(1-p_\mathrm{g})[1-(1-p_\mathrm{g})^{t_\mathrm{c}}]}{1+ \frac{2(1-p_\mathrm{g})}{p_\mathrm{g}}[1-(1-p_\mathrm{g})^{t_\mathrm{c}}]}\right).
\end{align}

To compute the expected Werner parameter we must find the hitting probabilities $\gamma_{-1,\hat{t}}$ of states $\hat{t}\in \mathcal{S}_\mathrm{absorbing}$. The hitting probabilities satisfy the system \cref{eq: linear system werner parameter deterministic}.
Recursively substituting $\gamma_{t+1,\hat{t}}$ into $\gamma_{t,\hat{t}}$ we find that
\begin{align*}
    \gamma_{-1,\hat t} &= P_{-1,\hat t}+P_{-1, -1}\gamma_{-1,\hat t}+P_{{-1},0} \gamma_{0,\hat t}\\
    &= \dots \\
    &= P_{-1,\hat t}+\sum_{t=0}^{t_\mathrm{c}-1}P_{t,\hat t}\prod_{t'=0}^tP_{t'-1,t'}\\
    &\qquad+\left( P_{-1,-1}+\sum_{t=0}^{t_\mathrm{c}-1}P_{t,-1}\prod_{t'=0}^tP_{t'-1,t'}\right)\gamma_{-1,\hat t}.
\end{align*}
This equation can be solved for $\gamma_{-1,\hat t}$.
Note that for the transitions into the end-to-end states only $P_{t-1,\hat t}$ is nonzero, because the end-to-end link $\hat t$ can only be obtained when the time step starts with a link of age $t-1$ on one of the two segments. For $t=0$ we thus find
\begin{align*}
    \gamma_{-1,\hat{0}}
    &=  \frac{1}{1+ 2\frac{1-p_\mathrm{g}}{p_\mathrm{g}}[1-(1-p_\mathrm{g})^{t_\mathrm{c}}]},
\end{align*}
and for $t>0$, we find
\begin{align*}
    \gamma_{0,\hat t} =\frac{2(1-p_\mathrm{g})^{t}}{1+ 2\frac{1-p_\mathrm{g}}{p_\mathrm{g}}[1-(1-p_\mathrm{g})^{t_\mathrm{c}}]}.
\end{align*}
The expected Werner parameter follows from \cref{eq: werner deterministic} 
and can be written as
\begin{equation}
    \overline{w} = w_0^2 \frac{1+2(1-p_\mathrm{g})\lambda\frac{1-(1-p_\mathrm{g})^{t_\mathrm{c}}\lambda^{t_\mathrm{c}}}{1-(1-p_\mathrm{g})\lambda}}{1+ 2\frac{1-p_\mathrm{g}}{p_\mathrm{g}}[1-(1-p_\mathrm{g})^{t_\mathrm{c}}]}.
\end{equation}

\begin{figure}
    \centering
    \includegraphics[width=0.9\linewidth]{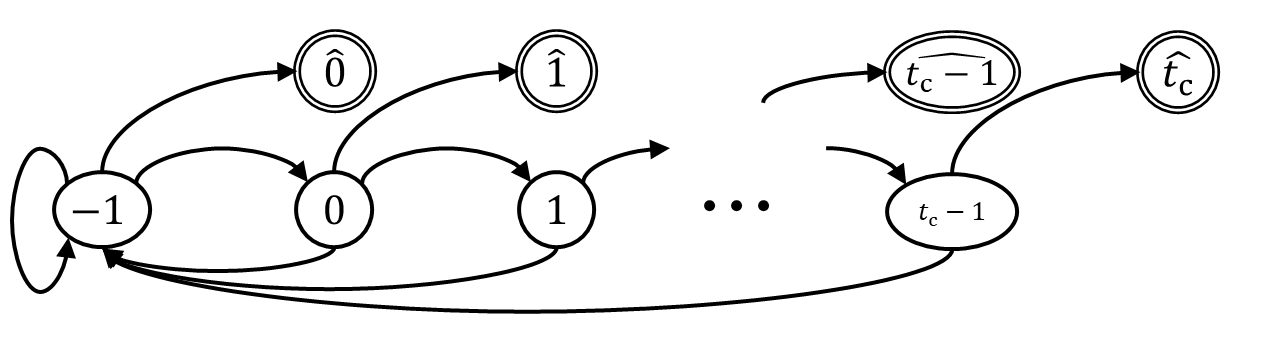}
    \caption{\textbf{Markov chain for three-node quantum repeater chain with deterministic cutoffs and cutoff time $t_\mathrm{c}$.} Transient states are circled with one line and absorbing states with double lines. Nonzero transition probabilities are indicated with arrows. The unit transition probability from an absorbing state back to itself is omitted.}
    \label{fig: Markov chain}
\end{figure}

\subsection{Probabilistic cutoffs}
An immediate generalization of the Markov chain model for the deterministic cutoff policy to the probabilistic cutoff policy would result in infinite state spaces as links can in principle reach arbitrarily large ages when probabilistic cutoffs are used. In turn, this would lead to infinite dimensional linear systems to solve, which is numerically intractable. In this section we therefore introduce a different Markov chain model, with finite state space, to compute the expected delivery time and Werner parameter for the case of probabilistic cutoffs. The details are worked out and implemented for chains of three, four and five nodes. The three node case is solved analytically. The extension to an arbitrary number of nodes is left to future work.

The states for the Markov chain model for the repeater chain with probabilistic cutoff policy are chosen to encode only the presence or absence of links on the various segments. As in the treatment of the deterministic cutoff policy we let $n=n_\mathrm{node}-1$ denote the number of segments of the repeater chain again. For the probabilistic cutoff policy we then define the state space $\mathcal{S}$ of the Markov chain to consist of all binary strings of length $n$, i.e. $\mathcal{S}=\{0,1\}^n$. Each binary string encodes the presence and absence of links on the various segments. A $0$ correspond to a segment with no link, while $k$ adjacent $1$'s correspond to an entangled link of length $k$. For example, in a five-node chain there are four segments and the string $1101$ describes an entangled link spanning the two left-most segments and an elementary link on the right-most segment. The empty chain corresponds to the all-zeros string $\vec 0 = 00\cdots0$. The string of all ones $\vec 1=11\cdots 1$ corresponds to the repeater chain state with the end-to-end link. The partition of the state space into transient and absorbing states is
\begin{align}
    \mathcal{S}_\mathrm{transient} &= \{0,1\}^n\setminus\{11\cdots 1\} \\
    \mathcal{S}_\mathrm{absorbing} &= \{11\cdots 1\}.
\end{align}
The transition probability matrix $P$ can be inferred from the time step by considering the possible outcomes of the HEG, swap, and cutoff phases. We work out the three-node case as an example at the end of this section.

\paragraph{Expected delivery time}
The computation of the expected delivery time of an end-to-end link can be done with the same method as used for the deterministic cutoff policy above. We compute the expected delivery time of the end-to-end link from the equations for the expected hitting times $v_s$ for $s\in \mathcal{S}_\mathrm{transient},$ where $v_s$ is the expected time to reach $\mathcal{S}_\mathrm{absorbing}$ when starting in state $s\in \mathcal{S}_\mathrm{transient}$. These expected hitting times satisfy
\begin{equation}\label{eq: linear system delivery time probabilistic}
    v_s = 1 + \sum_{s'\in \mathcal{S}_\mathrm{transient}} P_{s,s'}v_{s'}.
\end{equation}
This is a linear system of dimension $|S_\mathrm{transient}|=2^n-1$. The expected delivery time of an end-to-end link starting from the empty chain is found as $v_{\vec 0}=\overline{T}$. 

\paragraph{Expected Werner parameter}
The computation of the expected Werner parameter requires a different method from what was done for deterministic cutoffs since in the Markov chain model for the probabilistic cutoff policy we do not track the ages of links in the state space. The idea of this different method is to define a \textit{Werner vector} that contains all the Werner parameters of the links in the chain. As the chain evolves from state $s$ to $s'$ in a single time step this Werner vector is updated by multiplying it with a \textit{Werner update matrix} $M_{\{s,s'\}}$. The updates in consecutive time steps are obtained by taking products of the Werner update matrices, and weighing them properly by the transition probabilities. For each state $s\in \mathcal{S}_\mathrm{transient}$, there will then be an \textit{expected Werner update matrix} $\overline{M}_{\{s,\vec 1\}}$ that describes the expected update to the Werner vector as an end-to-end link is produced starting from state $s$. From the expected Werner update matrices we can compute the expected Werner parameter of the end-to-end link. The main result that we prove in this section is that the expected Werner update matrix can be found as the solution to a finite dimensional linear system of equations. This is the content of the following proposition.
\begin{proposition}\label{prop: linear sytem for expected werner matrix}
    The expected Werner update matrices $\overline{M}_{\{s,\vec 1\}}$ for $s\in \mathcal{S}_\mathrm{transient}$ satisfy the linear system of equations
    \begin{equation}\label{eq: linear system expected werner matrices}
        \overline{M}_{\{s,\vec 1\}} = H_{\{s,\vec 1\}} + \sum_{s'\in \mathcal{S}_\mathrm{transient}}H_{\{s,s'\}} \overline{M}_{\{s',\vec 1\}},
    \end{equation}
    where $H_{\{s,s'\}}:= P_{s,s'} M_{\{s,s'\}}$ are the \emph{one-step Werner update matrices.}
\end{proposition}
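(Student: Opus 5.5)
The plan is a first-step (one-step) analysis, the same kind of argument that gives the hitting-time system \cref{eq: linear system delivery time probabilistic} and the absorption-probability system \cref{eq: linear system werner parameter deterministic}, but carried out at the level of matrix-valued path weights. First the expected Werner update matrix has to be pinned down as a sum over paths. A trajectory of the Markov chain from $s$ to absorption is a finite sequence $s=s_0,s_1,\dots,s_k=\vec 1$ with $s_1,\dots,s_{k-1}\in\mathcal{S}_\mathrm{transient}$. Along such a path the Werner vector is updated by the ordered product $M_{\{s_{k-1},s_k\}}\cdots M_{\{s_0,s_1\}}$. The order of factors follows the convention that a later step multiplies on the left. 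The path occurs with probability $\prod_i P_{s_i,s_{i+1}}$. I would therefore define
\[
\overline{M}_{\{s,\vec 1\}} := \sum_{k\geq 1}\ \sum_{s_1,\dots,s_{k-1}\in\mathcal{S}_\mathrm{transient}} H_{\{s_{k-1},\vec 1\}}\cdots H_{\{s_1,s_2\}}H_{\{s_0,s_1\}},
\]
with $s_0=s$. Because the matrices $M$ act linearly on the Werner vector and the randomness of each step depends only on the current configuration (the Markov property), this is exactly the expected update conditioned on starting in $s$. This holds because $P_{s,s'}$ weights each transition and the per-step update $M_{\{s,s'\}}$ depends only on the pair $(s,s')$.

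Second, I would split the sum according to the first step. The $k=1$ term is $H_{\{s,\vec 1\}}$. In every term with $k\geq 2$ the first transition goes to some $s_1=s'\in\mathcal{S}_\mathrm{transient}$. The remaining factors then range over all paths from $s'$ to $\vec 1$, and after relabelling they sum to $\overline{M}_{\{s',\vec 1\}}$. This splitting yields \cref{eq: linear system expected werner matrices}, up to the side on which $H_{\{s,s'\}}$ appears. If the Werner update matrices are written so that they compose as $M_{\text{first}}M_{\text{later}}$, for example because they act on row vectors, the formula comes out exactly as stated. I would fix that convention explicitly at the start so that the factorisation matches.

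The main obstacle is justifying the rearrangement: the path sum is infinite and has matrix-valued terms. I would argue absolute convergence entrywise. The entries of each $M_{\{s,s'\}}$ are nonnegative and bounded by 1, since they encode multiplication by $\lambda\le 1$, swaps giving products, and copying or zeroing entries. Hence each term is dominated entrywise by a constant times the path probability. The total probability of paths with $k$ steps decays geometrically. The reason is that from any transient state absorption is reached within $n$ steps with probability at least $(p_\mathrm{g}^n p_\mathrm{s}^{n})(1-p_\mathrm{c})^{n}>0$; if $p_\mathrm{c}=1$, one step with all links generated suffices. So $\overline{M}$ is finite, and Fubini or Tonelli allows the first-step regrouping. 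If the bound on entries of the products is not uniform, I would instead bound the norm of $\sum_{s'}H_{\{s,s'\}}$ restricted to transient states by a substochastic argument: powers of the transient block of $P$ decay geometrically, and the matrix norms of the $M$ are at most 1.
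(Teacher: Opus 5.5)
Your proposal is correct and is essentially the paper's argument. The paper likewise splits off the $k=0$ term $H_{\{s,\vec 1\}}$ and factors $H^k_{\{s,s''\}}=\sum_{s'}H_{\{s,s'\}}H^{k-1}_{\{s',s''\}}$ with the first step on the left, because it already uses the row-vector convention $u'_{a'}=\sum_a u_a (M_{\{s,s'\}})_{aa'}$ that you anticipated needing; it then justifies the interchange by dominating $(H^k_{\{s,s'\}})_{aa'}$ by $P^k_{s,s'}$ and using that the expected absorption time is finite.

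The one step to tighten is the entrywise bound on products of update matrices (equivalently, your norm bound, which must use the maximum-column-sum norm). It rests on each $M_{\{s,s'\}}$ having a single nonzero entry per column, lying in $(0,1]$, and not merely on its entries lying in $[0,1]$. Once you use that, your path-sum bound shows every entry of $\overline{M}_{\{s,\vec 1\}}$ is at most the total probability of absorbing paths, i.e.\ at most $1$, so the geometric-decay estimate is not even needed.
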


To prove this proposition we need to make the definitions of the Werner vector and its update matrices precise. We do this for chains of $n=2,3,4$ segments. 

\textbf{Werner vector.} For these chains we define the \emph{Werner vector} $$\uvec = (u_0,\dots,u_{n-1},u_n),$$ where the entry $u_a$ for indices $a\in\{0,\dots,n-1\}$ is the Werner parameter of the link on segment $a$. In particular, if an entangled link spans multiple segments $a,\dots,a+k$, then all $u_a, \dots, u_{a+k}$ are equal to the Werner parameter of this entangled link. If there is no link on segment $a$, then $u_a=1$. The final entry $u_n$ is defined to be the product of the Werner parameters of all links. For example, the Werner vector for the state $1101$ of a five node chain is of the form
$$\uvec = (
    w_0^2 \lambda^{t_0}, w_0^2 \lambda^{t_0}, 1, w_0 \lambda^{t_3}, w_0^3\lambda^{t_0+t_3}
),$$
where $w_0$ is the initial Werner parameter of a freshly generated link, $\lambda = e^{-1/\tau_\mathrm{coh}}$ is the depolarizing parameter, $t_0$ is the age of the link over segments $0$ and $1$, and $t_3$ is the age of the link on segment $3$.

\textbf{Updating the Werner vector.} With this definition, the Werner vector $\uvec'$ at the end of a time step can be deterministically expressed in terms of the Werner vector $\uvec$ at the start of the time step, as long as the starting state $s$ and ending state $s'$ are given. For chains of two, three and four segments, the entry of the Werner vector $\uvec'$ at index $a'$ can be given in terms of the Werner vector $\uvec$ by 
\begin{equation}\label{eq: werner entry update}
u'_{a'} = u_a w_0^{\ell}\lambda^{m}
\end{equation}
for some index $a\in \{0,\dots,n\}$ and integers $\ell,m\geq 0$.
The integer $\ell$ in \cref{eq: werner entry update} is the number of new elementary links that were used to construct the link in state $s'$, while the integer $m$ is the number of already-active links that went into the construction of the link in state $s'$. This is because any link in the state $s'$ is obtained by generating new elementary links and swapping them with links that were already active in state $s$. Moreover, in a chain of at most four segments there can be at most two disjoint links in any state $s$. The Werner vector is so constructed that it contains the Werner parameters of these two disjoint links individually, in the entries $u_a$ for $a\in \{0,\dots,n-1\}$, but it also contains the product of their Werner parameters in the entry $u_n$. We can thus find $\ell$, $m$ and $a$ in \cref{eq: werner entry update} by considering which links already present in state $s$ and which newly generated links in the transition from $s$ to $s'$ were used to construct the link with Werner parameter $u'_{a'}$ in state $s'$.
We describe the selection of $\ell$, $m$ and $a$ for a given $a'$ in more detail below. 

We first consider the case that a link is present at index $a'\in \{0,1,\dots,n\}$ in state $s'$. We subdivide this case further into the number of already-active links of state $s$ that end up in this link. In all these subcases, the integer $\ell$ must be chosen to correspond to the number of newly generated elementary links involved in constructing the link at index $a'$. 
\begin{itemize}
    \item \textbf{No already-active links in $a'$.} If the link corresponding to index $a'$ in state $s'$ does not involve any active links from state $s$, then $m=0$. The index $a$ may be taken equal to $a'$ so that $u_{a}$ equals $1$. 
    
    \item \textbf{One already-active link in $a'$.} If the link in state $s'$ involves one link that was already active in state $s$, then $m=1$. The index $a$ must be chosen so that $u_a$ equals the Werner parameter of that already-active link. 
    
    \item \textbf{Two already-active links in $a'$.} If the link in state $s'$ involves two disjoint links that were already active in state $s$, then $m=2$. The index $a$ must be chosen to be $n$ so that $u_a$ equals the product of the Werner parameters of the two already-active links.
\end{itemize}
In all these cases, the entry $u_n$ that keeps track of the product of the Werner parameters of all the links in the chain must be multiplied by factors of $w_0$ and $\lambda$ that correspond to all the newly generated links in the chain and all the links already active in state $s$ to give $u'_n$.

The above shows how the entries of the Werner vector $\uvec$ in state $s$ can be used to obtain entries of the Werner vector $\uvec'$ to describe the links in state $s'$. It can also happen that links of state $s$ disappear in the transition to the new state $s'$, either through failed swaps or through cutoffs. The entries of the Werner vector that correspond to the links that disappear must then get reset to $1$ so that they encode for empty segments again. This resetting can be done by setting $u'_{a'}=u_a$, where $a$ is the index of a segment that did not have a link in state $s$ (since then $u_a=1$). Such an $a$ must always exist in a state $s$ that has no end-to-end entanglement. 
The entry $u_n$ must be appropriately updated as well, there are three cases:
\begin{itemize}
    \item \textbf{No links in $s'$.} If no links remain in state $s'$ after failed swaps or cutoffs, then $u_n$ must be reset to $1$. This can be done by setting $u'_n=u_a$ for the same $a$ as above.

    \item \textbf{One link in $s'$.}  If one link remains in state $s'$ after failed swaps or cutoffs, then the entry $u_n$ must be set to the Werner parameter of the remaining link. This can be done by setting $u'_n=u_a w_0^\ell \lambda^{m}$, where $u_a$ is an entry that corresponds to the link that remains and appropriate factors of $w_0$ and $\lambda$ must be used to account for newly generated elementary links and aging already-active links. 

    \item \textbf{Two links in $s'$.} The case that two disjoint links remain in state $s'$ after failed swaps or cutoffs cannot occur in chains of at most four segments. This is because there can be at most two disjoint links to start with in such chains, and failed swaps or cutoffs strictly reduce the number of disjoint links.
\end{itemize}

\textbf{Werner update matrix.} Having established that for a given transition of states $s\to s'$ the update $\uvec \to \uvec'$ of the Werner vector is given by \cref{eq: werner entry update} we can now define the Werner update matrices.  We define the \emph{Werner update matrix} $(M_{\{s,s'\}})_{aa'}$ such that the column with index $a'$ has a single nonzero entry $w_0^\ell\lambda^m$ in row $a$, where $\ell$, $m$, and $a$ are as in \cref{eq: werner entry update}. With this definition, the update of the Werner vector can be represented by multiplying $\uvec$ from the right with $M_{\{s,s'\}}$. That is,
\begin{equation}\label{eq:update_rule_w_dep_M}
    u'_{a'} = \sum_{a=0}^n u_{a} (M_{\{s,s'\}})_{aa'}.
\end{equation}
All Werner update matrices for the three-node chain and some illustrative examples for the five-node chain are provided at the end of this section. All the Werner update matrices for four- and five-node chains can be found in the repository at \cite{Grimbergen2026}.

\textbf{One-step Werner update matrix.} Recall that our goal is to compute the expected Werner parameter of the end-to-end link starting from the empty state. The expected Werner parameter of the end-to-end link can be obtained by averaging over all the possible ways in which the empty state can evolve into the end-to-end state. This evolution follows the dynamics of the transition probability matrix $P$. To account for this when updating the Werner vectors we introduce the \textit{one-step Werner update matrix} $H$, defined by
\begin{equation}\label{eq: definition of H}
    H_{\{s,s'\}} = P_{s,s'} M_{\{s,s'\}}.
\end{equation}
In the one-step Werner update matrix all Werner update matrices $M_{\{s,s'\}}$ are weighed by their corresponding transition probability. 

\begin{remark}
More generally, the \emph{$k$-step Werner update matrix} for $k\geq 0$, is defined as the  $k$-th power of $H$. For $k\geq 2$ it can be computed as
\begin{multline}
    H^k_{\{s_0,s_k\}} = \sum_{s_1,\dots,s_{k-1} \in \mathcal{S}_\mathrm{transient}}P_{s_0,s_1}\cdots P_{s_{k-1},s_k}\cdot\\ M_{\{s_0,s_1\}}\cdots M_{\{s_{k-1},s_{k}\}}.
\end{multline}
For $k=0$ we define $H^0_{\{s,s'\}} = \delta_{s,s'} \II_{n+1}$, where $\II_{n+1}$ is the identity matrix in Werner vector space.
\end{remark}
Hence, $H^k_{\{s_0,s_k\}}$ sums together the products of the Werner update matrices over all paths of length $k$ from a state $s_0$ to a state $s_k$, weighed by the probability that that path occurs. For a path of zero steps $H^0$ acts as the identity on both the state and the Werner vector.

\textbf{Expected Werner update matrix.} For a transient state $s \in \mathcal{S}_\mathrm{transient}$, we then define the \textit{expected Werner update matrix} $\overline{M}_{\{s,\vec 1\}}$ as
\begin{equation}\label{eq: average werner update matrix infinite sum}
    \overline{M}_{\{s,\vec 1\}} = \sum_{k=0}^\infty\sum_{s'\in \mathcal{S}_\mathrm{transient}} H^k_{\{s,s'\}} H_{\{s',\vec 1\}}.
\end{equation}
For each $k\geq 0$ and $s'\in \mathrm{S}_\mathrm{transient}$ the term $H^k_{\{s,s'\}}H_{\{s',\vec 1\}}$ averages the updates to the Werner vector over paths of length $k+1$ that start from $s$ and enter the end-to-end state $\vec 1$ from state $s'$. The sums over $k$ and $s'$ ensure that all possible paths from the empty chain to the end-to-end state are taken into account. \cref{lemma: overline M well-defined}  below shows that the series in \cref{eq: average werner update matrix infinite sum} converges so that the expected Werner update matrix is well-defined.

\textbf{Expected Werner parameter.} The \textit{expected Werner parameter} $\overline{w}$ of the end-to-end link can be computed by taking the $n$-th component of the Werner vector after applying the expected Werner update matrix $\overline{M}_{\{\vec 0,\vec 1\}}$ to the Werner vector $\onesvec=(1,\dots,1)$ of the empty chain $s=\vec 0$. In other words, \begin{equation}\label{eq: expected werner probabilistic}
    \overline{w} = \sum_{a=0}^n(\overline{M}_{\{\vec 0, \vec 1\}})_{an}.
\end{equation}

To compute $\overline{w}$ we must therefore compute the matrix $\overline{M}_{\{\vec 0, \vec 1\}}$.
In \cref{eq: average werner update matrix infinite sum} we have an expression for this matrix in terms of an infinite series, but we cannot evaluate this analytically, and neither can we evaluate it exactly on a computer. However, as stated in \cref{prop: linear sytem for expected werner matrix} above we can rewrite \cref{eq: average werner update matrix infinite sum} into a linear system for all $\overline{M}_{\{s,\vec 1\}}$ with $s\in \mathcal{S}_\mathrm{transient}$ that can be solved analytically, or exactly on a computer. We now prove this proposition.
\begin{proof}[\textbf{Proof of \cref{prop: linear sytem for expected werner matrix}}]
    The proof is similar to that of \cite[Theorems 1.42 and 1.43]{serfozo2009basics} which gives rise the linear systems in \cref{eq: linear system delivery time deterministic}, \cref{eq: linear system werner parameter deterministic} and \cref{eq: linear system delivery time probabilistic} that have been presented above. Starting from \cref{eq: average werner update matrix infinite sum} we compute
    \begin{align*}
        \overline{M}_{\{s,\vec 1\}} &= \sum_{k=0}^\infty\sum_{s''\in \mathcal{S}_\mathrm{transient}} H^k_{\{s,s''\}} H_{\{s'',\vec 1\}} \\
        &= H_{\{s,\vec 1\}} + \sum_{k=1}^\infty\sum_{s''\in \mathcal{S}_\mathrm{transient}} H^k_{\{s,s''\}} H_{\{s'',\vec 1\}} \\
        &= H_{\{s,\vec 1\}} + \sum_{k=1}^\infty\sum_{s',s''\in \mathcal{S}_\mathrm{transient}} H_{\{s,s'\}}H^{k-1}_{\{s',s''\}} H_{\{s'',\vec 1\}} \\
        &= H_{\{s,\vec 1\}} \\
        &\,+ \sum_{s'\in \mathcal{S}_\mathrm{transient}}\sum_{k=1}^\infty\sum_{s''\in \mathcal{S}_\mathrm{transient}} H_{\{s,s'\}}H^{k-1}_{\{s',s''\}} H_{\{s'',\vec 1\}} \\
        &= H_{\{s,\vec 1\}} + \sum_{s'\in \mathcal{S}_\mathrm{transient}} H_{\{s,s'\}}\overline{M}_{\{s',\vec 1\}}.
    \end{align*}
    The first equality is the definition of $\overline{M}_{\{s,\vec 1\}}$ as given in \cref{eq: average werner update matrix infinite sum} and the second equality uses the definition of $H^0$ to split the sum in two parts. The third equality uses the definition of $H^k$. The series over $k$ and the sum over $s'$ may be interchanged in the fourth step because \cref{lemma: overline M well-defined} shows that the series converges. The final step uses \cref{eq: average werner update matrix infinite sum} again, after reindexing the series. This shows that the $\overline{M}_{\{s,\vec 1\}}$ for $s\in \mathcal{S}_\mathrm{transient}$ satisfy the linear system \cref{eq: linear system expected werner matrices} and thereby proves the proposition.
\end{proof}

\cref{prop: linear sytem for expected werner matrix} gives a linear system for the expected Werner matrices. This system can solved analytically (for three-node chains) or exactly on a computer. The expected Werner parameter can then be computed from \cref{eq: expected werner probabilistic}. For chains of four and five nodes we have built the state space, the transition probability matrix and the Werner update matrices in the code at \cite{Grimbergen2026}. The code then solves for the expected delivery time and expected Werner parameter. The dimensions of the linear system for four nodes is $28$ and for five nodes it is $75$. In both cases our numerical implementation using \texttt{numpy.linalg.solve} is sufficient. For three nodes we provide analytic solutions for the expected delivery time and expected Werner parameter below. But first we show that the expected Werner update matrix as defined by \cref{eq: average werner update matrix infinite sum} is well-defined.

\paragraph{The expected Werner update matrix is well-defined}
We show that the series that defines the expected Werner update matrix in \cref{eq: average werner update matrix infinite sum} is well-defined, i.e. that it converges. For this we need the following remark and lemma.
\begin{remark}\label{remark: structure of Ms}
    By construction, each $M_{\{s,s'\}}$ has a single nonzero entry in each column. Moreover, this nonzero entry takes values in $(0,1]$.
\end{remark} 
\begin{lemma}\label{lemma: H bounded by Q}
    The $k$-step Werner update matrix $H^k$ satisfies $$(H^k_{\{s,s'\}})_{aa'} \leq P^k_{s,s'}$$ for all $s,s'\in \mathcal{S}$, $a,a'\in \{0,\dots,n\}$, and $k\geq 0$, where $P^k$ is the $k$-th power of $P$.
\end{lemma}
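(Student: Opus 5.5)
We prove the bound by induction on $k$, using \cref{remark: structure of Ms} as the only structural input. The key is to control a whole column of a product of Werner update matrices, not just single entries. We first note that every product $M_{\{s_0,s_1\}}\cdots M_{\{s_{k-1},s_k\}}$ of Werner update matrices has at most one nonzero entry per column, and that entry lies in $(0,1]$. This follows by induction on the number of factors. Suppose $A$ has at most one nonzero entry per column, with values in $[0,1]$. Then column $a'$ of $AM$ equals $(M)_{ba'}$ times column $b$ of $A$, where $b$ is the unique row in which column $a'$ of $M$ is nonzero. Hence column $a'$ of $AM$ again has at most one nonzero entry, and that entry is a product of numbers in $[0,1]$. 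In particular every entry of such a product lies in $[0,1]$.

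\textbf{Base case and expansion.} For $k=0$ we have $(H^0_{\{s,s'\}})_{aa'}=\delta_{s,s'}\delta_{aa'}\leq \delta_{s,s'}=P^0_{s,s'}$. For $k\geq 1$ we expand $H^k$ using the path-sum formula of the Remark. For $k=1$ this is just the definition of $H_{\{s,s'\}}$. We take the intermediate sums over all of $\mathcal{S}$, which is how $H^k$ is defined as a matrix power; restricting the sum to transient states only makes it smaller, since all terms are nonnegative. This gives
\begin{equation*}
(H^k_{\{s_0,s_k\}})_{aa'}=\sum_{s_1,\dots,s_{k-1}} P_{s_0,s_1}\cdots P_{s_{k-1},s_k}\,\bigl(M_{\{s_0,s_1\}}\cdots M_{\{s_{k-1},s_k\}}\bigr)_{aa'}.
\end{equation*}

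\textbf{Conclusion.} Each matrix entry in this sum lies in $[0,1]$ by the first step, and the probability weights are nonnegative. So each term is at most its weight $P_{s_0,s_1}\cdots P_{s_{k-1},s_k}$, and summing the weights over the intermediate states gives exactly $P^k_{s_0,s_k}$. This proves $(H^k_{\{s,s'\}})_{aa'}\leq P^k_{s,s'}$.

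Alternatively, one can run a direct induction: $(H^{k+1})_{aa'}=\sum_{s''}(H^k_{\{s,s''\}})_{ab}\,P_{s'',s'}\,(M_{\{s'',s'\}})_{ba'}$, with $b$ the unique nonzero row of column $a'$, and then bound this by $\sum_{s''} P^k_{s,s''}P_{s'',s'}=P^{k+1}_{s,s'}$. This route avoids products of $M$'s entirely.

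The only delicate point is that the entrywise bound needs each column of $M_{\{s,s'\}}$ to have at most one nonzero entry. If a column had several nonzero entries, each bounded by $1$, multiplying matrices could produce column sums, and hence entries, larger than $1$. \cref{remark: structure of Ms} supplies exactly the single-entry-per-column property that makes the induction close. We also need $M_{\{s,s'\}}$ to be defined, or harmlessly set to zero, whenever $P_{s,s'}=0$. This is immaterial, because such terms are multiplied by a zero weight.
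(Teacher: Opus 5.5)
Your proof is correct and takes essentially the paper's approach. The paper runs exactly your ``alternative'' induction: it bounds $(H^{k+1}_{\{s,s''\}})_{aa''}$ by $\sum_{s'} P^k_{s,s'}P_{s',s''}$, using that each column of $M_{\{s',s''\}}$ has a single nonzero entry in $(0,1]$. Your main route applies that same induction to products of the $M$'s and then uses the path-sum expansion. Your observation that restricting the intermediate states to $\mathcal{S}_\mathrm{transient}$ only decreases the sum is a useful clarification, since the paper's Remark sums over transient states while the lemma compares against the full power $P^k$.
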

\begin{proof}
    For $k=0$ the result follows by definition.
    For $k=1$ we have for all $s,s'\in \mathcal{S}$ and $a,a'\in \{0,\dots,n\}$ that, $$(H_{\{s,s'\}})_{aa'}=P_{s,s'}(M_{\{s,s'\}})_{aa'}\leq P_{s,s'},$$
    since $(M_{\{s,s'\}})_{aa'}\in [0,1]$ by \cref{remark: structure of Ms}.
    By induction we now assume that for some $k\geq 1$ the entry-wise inequality $(H^k_{\{s,s'\}})_{aa'} \leq P_{s,s'}^k$ holds for all $s,s'\in \mathcal{S}$ and $a,a'\in \{0,\dots,n\}$. Then, for $s,s''\in \mathcal{S}$, $a,a''\in \{0,\dots,n\}$ and $k+1$ we compute
    \begin{align*}
        (H^{k+1}_{\{s,s''\}})_{aa''} &= \sum_{s',a'}(H^{k}_{\{s,s'\}})_{aa'}(H_{\{s',s''\}})_{a'a''} \\
        &\leq \sum_{s',a'}P_{s,s'}^k P_{s',s''}(M_{\{s',s''\}})_{a'a''} \\
        &= \sum_{s'}P_{s,s'}^k P_{s's''}(M_{\{s',s''\}})_{\underline a' a''} \\
        &\leq \sum_{s'}P_{s,s'}^{k}P_{s',s''} \\
        &= P_{s,s''}^{k+1}.
    \end{align*}
    In the first step we have used the induction hypothesis,  $(H_{\{s,s'\}}^k)_{aa'}\leq P^k_{s,s'}$. In the second and third step we used \cref{remark: structure of Ms} to conclude that $\sum_{a'} (M_{\{s',s''\}})_{a'a''}=(M_{\{s',s''\}})_{\underline a' a''}\leq 1$, where $\underline a'$ is the unique positive element in column $a''$ of $M_{\{s',s''\}}$.
    This completes the induction step, and the Lemma is proven.
\end{proof}
We use \cref{lemma: H bounded by Q} to show that the series that defines the expected Werner update matrix converges.
\begin{lemma}\label{lemma: overline M well-defined}
    The series $\sum_{k=0}^\infty \sum_{s'\in \mathcal{S}_\mathrm{transient}} H^k_{\{s,s'\}} H_{\{s,\vec 1\}}$ in \cref{eq: average werner update matrix infinite sum} converges.
\end{lemma}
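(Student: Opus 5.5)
I will prove convergence entrywise, by dominating the matrix series with a scalar series that is known to converge. The tool is \cref{lemma: H bounded by Q}. Note that the statement as printed has $H_{\{s,\vec 1\}}$ where \cref{eq: average werner update matrix infinite sum} has $H_{\{s',\vec 1\}}$. I will treat the latter as intended; the argument works for either.

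\textbf{Step 1: entrywise bound on each term.} All matrices $H^k_{\{s,s'\}}$ and $H_{\{s',\vec 1\}}$ have nonnegative entries, since they are products of probabilities and of Werner update matrices whose entries lie in $[0,1]$ by \cref{remark: structure of Ms}. So every partial sum is entrywise nondecreasing, and convergence is equivalent to boundedness of the partial sums entrywise. For fixed $a,a''$ I will use \cref{lemma: H bounded by Q} for the factor $H^k$, and use that each column of $M_{\{s',\vec 1\}}$ has a single entry at most $1$:
\begin{equation*}
\bigl(H^k_{\{s,s'\}}H_{\{s',\vec 1\}}\bigr)_{aa''} \le \sum_{a'} P^k_{s,s'} P_{s',\vec 1}(M_{\{s',\vec 1\}})_{a'a''} \le P^k_{s,s'}P_{s',\vec 1}.
\end{equation*}
This is the same column argument used in the induction step of \cref{lemma: H bounded by Q}.

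\textbf{Step 2: reduce to a hitting probability.} Summing over $s'\in\mathcal{S}_\mathrm{transient}$ and $k\ge 0$ gives the bound $\sum_{k\ge0}\sum_{s'} P^k_{s,s'}P_{s',\vec 1}$. The absorbing state $\vec 1$ is entered only from transient states. Because of this, I would like to replace $P^k_{s,s'}$ with $s'$ transient by the probability that the chain is at $s'$ at time $k$ without having been absorbed earlier. These two quantities coincide, since an absorbed path can never return to a transient state. The $(k,s')$ term is then the probability that absorption happens exactly at step $k+1$ via $s'$. These events are disjoint over $(k,s')$, so the double sum equals the probability of ever hitting $\vec 1$, which is at most $1$. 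Hence every entry of every partial sum is bounded by $1$, and the series converges absolutely, entrywise.

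\textbf{Main obstacle.} The delicate point is the identification in Step 2. It requires that there is no transition from $\vec 1$ back to a transient state, which holds because $\vec 1$ is absorbing. It also requires that the nonnegativity argument handles rearrangement, which is what later justifies the interchange used in the proof of \cref{prop: linear sytem for expected werner matrix}. As an alternative, I would bound $\sum_k P^k_{s,s'}$ for transient $s,s'$ by the expected number of visits to $s'$. This is finite and at most $\overline{T}$-type quantities, because absorption is reached with positive probability from every transient state (for example, all segments generate and swap in one step with probability $p_\mathrm{g}^n p_\mathrm{s}^{n-1}>0$, possibly after cutoffs). This gives geometric decay $P^k_{s,s'}\le C\rho^k$ with $\rho<1$, and hence convergence.
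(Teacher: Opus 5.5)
Your argument is correct, and its main line differs from the paper's in one bounding step. After the same column argument, the paper discards the last factor entirely via $(H_{\{s',\vec 1\}})_{\underline a' b}\le 1$. It is left with $\sum_k\sum_{s'}P^k_{s,s'}=\sum_k\Pr[T_s>k]=\EE[T_s]$. It then has to show separately that $\EE[T_s]<\infty$, by dominating $T_s$ with a geometric variable of success probability $p_\mathrm{g}^n p_\mathrm{s}^{n-1}$; this is precisely your fallback.

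You instead keep the factor $P_{s',\vec 1}$. The $(k,s')$ terms then become probabilities of the disjoint events ``absorbed at step $k+1$, coming from $s'$'', so they sum to the absorption probability, which is at most $1$. This gives a sharper, uniform bound: every entry of $\overline{M}_{\{s,\vec 1\}}$ is at most $1$, as one expects of an averaged Werner update. It also needs no lower bound on the per-step success probability. The paper's bound, by contrast, ties the entries to the expected delivery time $v_s$ of \cref{eq: linear system delivery time probabilistic}.

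One small correction concerns your remark that the argument works for either form of the statement. With the literal factor $H_{\{s,\vec 1\}}$, which does not depend on $s'$, the disjoint-events identification is lost and the bound becomes $P_{s,\vec 1}\,\EE[T_s]$. That version therefore really needs your fallback, i.e.\ the paper's route, which does not care which index appears in the last factor.
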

\begin{proof}
    Since all entries of $H$ are positive we can show that the series $\sum_{k=0}^\infty \sum_{s'\in \mathcal{S}_\mathrm{transient}} H^k_{\{s,s'\}} H_{\{s,\vec 1\}}$ converges by providing an upperbound.
    First of all, for any pair of indices $a,b\in \{0,\dots,n\}$ we have that
    \begin{align*}
        &\sum_{k=0}^\infty\sum_{s'\in \mathcal{S}_\mathrm{transient}} (H^k_{\{s,s'\}} H_{\{s',\vec 1\}})_{ab} = \\
        &\qquad\qquad\sum_{k=0}^\infty\sum_{s'\in \mathcal{S}_\mathrm{transient}} \sum_{a'}(H^k_{\{s,s'\}})_{aa'} (H_{\{s',\vec 1\}})_{a'b}\\
        &\qquad\qquad\leq \sum_{k=0}^\infty \sum_{s'\in \mathcal{S}_\mathrm{transient}}(H^k_{\{s,s'\}})_{a\underline a'},
    \end{align*}
    where $\underline a'$ is the index corresponding to the only nonzero element in column $b$ of $H_{\{s',\vec 1\}}$. The inequality follows because  $(H_{\{s',\vec 1\}})_{\underline a'b}= P_{s',\vec 1} (M_{\{s',\vec 1\}})_{\underline a' b}\leq 1$. From \cref{lemma: H bounded by Q} it follows that
    \begin{equation*}
        \sum_{k=0}^\infty\sum_{s'\in \mathcal{S}_\mathrm{transient}} (H^k_{\{s,s'\}})_{a\underline a'} \leq \sum_{k=0}^\infty \sum_{s'\in \mathcal{S}_\mathrm{transient}} P_{s,s'}^k. 
    \end{equation*}
    Since both $s$ and $s'$ are transient states in this expression, we have that $P^k$ is the $k$-th power of the transition probability matrix restricted to the transient states. Therefore, $\sum_{s'\in \mathcal{S}_\mathrm{transient}} P^k_{s,s'}$ is equal to the probability that it takes more than $k$ time steps to reach $\mathcal{S}_\mathrm{absorbing}$ from state $s$.
    If we denote by $T_s$ the hitting time of $\mathcal{S}_\mathrm{absorbing}$ starting from state $s$, then we have that
    \begin{align*}
        \sum_{k=0}^\infty \sum_{s'\in \mathcal{S}_\mathrm{transient}}P^k_{s,s'} &= \sum_{k=0}^\infty \Pr[T_s>k] = \EE[T_s].
    \end{align*}
    The expected hitting time $\EE[T_s]$ is just the expected delivery time $v_s$ from \cref{eq: linear system delivery time probabilistic}.
    Therefore, we conclude that the series in \cref{eq: average werner update matrix infinite sum} gives a matrix for which each entry is upperbounded by the expected delivery time of the end-to-end link.
    This expected delivery time is finite because it in turn is upperbounded by the expectation value of a geometric random variable with success probability $p_\mathrm{g}^np_\mathrm{s}^{n-1}$. Indeed, the probability to generate the end-to-end link in any time step is at least $p_\mathrm{g}^np_\mathrm{s}^{n-1}$, which is the probability that all elementary link generations and swaps succeed simultaneously. We have thus found a finite upperbound on the series. 
\end{proof}

\paragraph{Three-node chains}
We employ the methods introduced in the paragraphs above to analytically compute the expected delivery time and expected Werner parameter for three-node chains with the probabilistic cutoff policy. 
In the case of a three-node chain we get the state spaces $\mathcal{S}_\mathrm{transient} = \{00, 10,01\}$ and $\mathcal{S}_\mathrm{absorbing} = \{11\}$. Using the assumption of the homogeneity of the system we can equivalently work with the state spaces $\mathcal{S}_\mathrm{transient}=\{00,10\}=\{0,1\}$ and $\mathcal{S}_\mathrm{absorbing}=\{11\}=\{2\}$, where we identify $0\equiv 00$, $1\equiv 10\equiv 01$ and $2\equiv 11$. As we explain below, the nonzero transition probabilities out of state $0$ are given by
\begin{align*}
    P_{0,1} &= 2p_\mathrm{g}(1-p_\mathrm{g})(1-p_\mathrm{c}) \\
    P_{0,2} &= p_\mathrm{g}^2 p_\mathrm{s} \\
    P_{0,0} &= (1-p_\mathrm{g})^2 + 2p_\mathrm{g}(1-p_\mathrm{g})p_\mathrm{c} + p_\mathrm{g}^2(1-p_\mathrm{s})
\end{align*}
The probability $P_{0,1}$ is the probability to transition from the empty state to a state with exactly one link at the end of the time step. This can only happen if a single link is generated and is not subsequently discarded on exactly one of the two segments. The probability $P_{0,2}$ is the probability that the end-to-end link is generated from the empty chain in a single time step. For this to happen both links must be generated and successfully swapped together. Finally, the probability $P_{0,0}$ is the probability that the chain returns to the empty state at the end of the time step. This can happen in three ways. Either no links are generated, or one link is generated but discarded, or two links are generated but the swap fails. Similar reasoning shows that the nonzero transition probabilities out of state $1$ are given by
\begin{align*}
    P_{1,2} &= p_\mathrm{g}p_\mathrm{s} \\
    P_{1,1} &= (1-p_\mathrm{g})(1-p_\mathrm{c}) \\
    P_{1,0} &=  p_\mathrm{g}(1-p_\mathrm{s})+(1-p_\mathrm{g})p_\mathrm{c}.
\end{align*}

The expected delivery time of an end-to-end link starting from the empty repeater chain is given by $v_0$, where the vector $(v_0,v_1)$ is the solution to \cref{eq: linear system delivery time probabilistic}. In terms of the transition probabilities we can write the linear system for $(v_0,v_1)$ in matrix form as
\begin{equation}
    \begin{pmatrix}
        1 - P_{0,0} & -P_{0, 1} \\ -P_{1,0} & 1- P_{1,1}
    \end{pmatrix}
    \begin{pmatrix}
        v_0 \\ v_1
    \end{pmatrix}
    = \begin{pmatrix}
        1 \\ 1
    \end{pmatrix}.
\end{equation}
The solution is
\begin{align*}
    \begin{pmatrix}
        v_0 \\ v_1
    \end{pmatrix}
    &= \begin{pmatrix}
        1 - P_{0,0} & -P_{0, 1} \\ -P_{1,0} & 1- P_{1,1}
    \end{pmatrix}^{-1}
    \begin{pmatrix}
        1 \\ 1
    \end{pmatrix}.
\end{align*}
Plugging in the transition probabilities in terms of $p_\mathrm{g}$, $p_\mathrm{s}$ and $p_\mathrm{c}$ this yields
\begin{align}
    \overline{T} &= \frac{1}{p_\mathrm{g}^2p_\mathrm{s}}\left(\frac{1+2p_\mathrm{g}\frac{(1-p_\mathrm{g})(1-p_\mathrm{c})}{1-(1-p_\mathrm{g})(1-p_\mathrm{c})}}{1+2\frac{(1-p_\mathrm{g})(1-p_\mathrm{c})}{1-(1-p_\mathrm{g})(1-p_\mathrm{c})}} \right).
\end{align}

To compute the expected Werner parameter we must first give the Werner update matrices $M_{\{s,s'\}}$. With the convention that state $1$ corresponds to a link on the left segment, i.e. $1\equiv10$, the Werner update matrices can be given by
\begin{align*}
    M_{\{0,0\}} &= \begin{pmatrix}
        1 & 0 & 0 \\ 0 & 1 & 0 \\ 0 & 0 & 1
    \end{pmatrix} \\
    M_{\{0,1\}} &= \begin{pmatrix}
        w_0 & 0 & 0 \\ 0 & 1 & 0 \\ 0 & 0 & w_0
    \end{pmatrix} \\
    M_{\{0,2\}} &= \begin{pmatrix}
        0 & 0 & 0 \\ 0 & 0 & 0 \\ w_0^2 & w_0^2 & w_0^2
    \end{pmatrix} \\
    M_{\{1,0\}} &= \begin{pmatrix}
        0 & 0 & 0 \\ 1 & 1 & 1 \\ 0 & 0 & 0
    \end{pmatrix} \\
     M_{\{1,1\}} &= \begin{pmatrix}
        \lambda & 0 & 0 \\ 0 & 1 & 0 \\ 0 & 0 & \lambda
    \end{pmatrix} \\
     M_{\{1,2\}} &= \begin{pmatrix}
        0 & 0  & 0 \\ 0 & 0 & 0 \\ w_0\lambda & w_0\lambda & w_0\lambda 
    \end{pmatrix}.
\end{align*}
Note that the Werner update matrix $M_{\{1,0\}}$ resets all entries of the Werner vector $\uvec$ to $1$ because in the convention that the link always sits on the left segment we have that $u_1=1$. Therefore,
$$\uvec M_{\{1,0\}} = \begin{pmatrix}
    u_0 &u_1 &u_2
\end{pmatrix}\begin{pmatrix}
        0 & 0 & 0 \\ 1 & 1 & 1 \\ 0 & 0 & 0
    \end{pmatrix} = \begin{pmatrix}
    1 &1 &1
\end{pmatrix},$$
which indeed properly resets the Werner vector to that of the empty state.

The expected Werner parameter follows from \cref{eq: expected werner probabilistic} and requires the computation of $(\overline{M}_{\{0,2\}})_{an}$ for $a\in \{0,1,2\}$. This computation is done by solving the linear system \cref{eq: linear system expected werner matrices} for $(\overline{M}_{\{s,2\}})_{an}$ with $s\in \{0,1\}$ and $a\in \{0,1,2\}.$ To write the linear system in matrix form we consider the $(sa)$-basis $\{(00),(01),(02),(10),(11),(12)\}$. We define the $(sa)$-indexed vector $\overline{\Mvec}$ through
\begin{equation}
    \overline{M}_{(sa)} = (\overline{M}_{\{s,2\}})_{a2},
\end{equation}
which contains the relevant components of the expected Werner matrix,
and similarly we define the $(sa)$-indexed vector $\Hvec$ through
\begin{equation}
    H_{(sa)} = (H_{\{s,2\}})_{a2},
\end{equation}
which contains the relevant components of the final Werner update step. Explicitly, we have that the vector $\Hvec$ in the $(sa)$-basis is given by
\begin{equation}
    \Hvec = \begin{pmatrix}
        0 \\ 0 \\ P_{0,2}w_0^2 \\ 0 \\ 0 \\ P_{1,2}w_0 \lambda 
    \end{pmatrix}.
\end{equation}
We can then write the linear system in \cref{prop: linear sytem for expected werner matrix} for the relevant components of the expected Werner update matrix as
\begin{equation}\label{eq: three node episodic probabilistic h method equation}
    (\II_6 - H)\overline{\Mvec} = \Hvec,
\end{equation}
where the matrix $H$ in the $(sa)$-basis is
\begin{equation}
    H = \begin{pmatrix}
        P_{0,0} & 0 & 0 & P_{0,1} w_0 & 0 & 0 \\
        0 & P_{0,0} & 0 & 0 & P_{0,1} & 0 \\
        0 & 0 & P_{0,0} & 0 & 0 & P_{0,1}w_0 \\
        0 & 0 & 0 & P_{1,1}\lambda & 0 & 0 \\
        P_{1,0} & P_{1,0} & P_{1,0} & 0 & P_{1,1} & 0 \\
        0 & 0 & 0 & 0 & 0 & P_{1,1}\lambda.
    \end{pmatrix}
\end{equation}
Solving \cref{eq: three node episodic probabilistic h method equation} we find that the expected Werner parameter is given by
\begin{align}
    \overline{w}
    &= w_0^2 \left( \frac{1 + 2 \frac{(1-p_\mathrm{g})(1-p_\mathrm{c})\lambda}{1-(1-p_\mathrm{g})(1-p_\mathrm{c})\lambda}}{1 + 2 \frac{(1-p_\mathrm{g})(1-p_\mathrm{c})}{1-(1-p_\mathrm{g})(1-p_\mathrm{c})}} \right).
\end{align}

Let us end this example of the three-node chain by noting that the analytic expressions for the expected delivery time and the expected Werner parameter in the three-node chain have a structure to them that is the same in both the deterministic and probabilistic cutoff policy. To make this manifest, we define two functions $A_\lambda(t_\mathrm{c})$ and $B_\lambda(p_\mathrm{c})$ as
\begin{align}
    A_\lambda(t_\mathrm{c}) &= \sum_{t=1}^{t_\mathrm{c}}(1-p_\mathrm{g})^t \lambda^t \label{eq: A lambda sum} \\
    &= [1-(1-p_\mathrm{g})^{t_\mathrm{c}}\lambda^{t_\mathrm{c}}]\frac{(1-p_\mathrm{g})\lambda }{1-(1-p_\mathrm{g})\lambda} \label{eq: A lambda}
\end{align}
and
\begin{align}
    B_\lambda(p_\mathrm{c}) &= \sum_{t=1}^\infty (1-p_\mathrm{g})^t(1-p_\mathrm{c})^t\lambda^t \label{eq: B lambda sum}
    \\
    &=  \frac{(1-p_\mathrm{g})(1-p_\mathrm{c})\lambda}{1-(1-p_\mathrm{g})(1-p_\mathrm{c})\lambda}\label{eq: B lambda},
\end{align}
where $\lambda = e^{-1/\tau_\mathrm{coh}}$ is the depolarizing parameter. The expected delivery time and expected Werner parameter of the deterministic cutoff policy can be expressed in terms of $A_\lambda(t_\mathrm{c})$ as 
\begin{align}\label{eq: analytic three node deterministic appendix}
    \overline{T} = \frac{1}{p_\mathrm{g}^2p_\mathrm{s}}\frac{1+2p_\mathrm{g}A_1(t_\mathrm{c})}{1+2A_1(t_\mathrm{c})} &&\mathrm{and}&& \overline{w}= w_0^2\frac{1+2A_\lambda(t_\mathrm{c})}{1+2A_1(t_\mathrm{c})}.
\end{align}
Similarly, the expected delivery time and expected Werner parameter of the probabilistic cutoff policy can be expressed in terms of $B_\lambda(p_\mathrm{c})$ as 
\begin{align}\label{eq: analytic three node probabilistic appendix}
    \overline{T} = \frac{1}{p_\mathrm{g}^2p_\mathrm{s}}\frac{1+2p_\mathrm{g}B_1(p_\mathrm{c})}{1+2B_1(p_\mathrm{c})} &&\mathrm{and}&& \overline{w}= w_0^2\frac{1+2B_\lambda(p_\mathrm{c})}{1+2B_1(p_\mathrm{c})}.
\end{align}
We will use these expression in \cref{appendix: RF supplement} to show that for the three-node chain the probabilistic cutoff always has a lower fidelity than the deterministic cutoff policy when compared at the same rate.

\paragraph{More examples of Werner update matrices for five-node chains}
To illustrate the construction of the Werner update matrices more concretely we provide examples below for transitions in a five-node chain. Consider the transition from state $s=0100$ to state $s'=1101$. The general form of the Werner vector in state $s=0100$ is
$\uvec=(1,w_0\lambda^{t_1},1,1,w_0\lambda^{t_1})$, where $t_1$ is the age of the link on segment $1$. Upon the transition to state $s'=1101$ the Werner vector must be updated according to
\begin{multline*}
\uvec =(
    1, w_0\lambda^{t_1}, 1, 1, w_0\lambda^{t_1}
)\to \\
\uvec'=(
    w_0^2\lambda^{t_1+1}, w_0^2\lambda^{t_1+1}, 1, w_0, w_0^3\lambda^{t_1+1}
),
\end{multline*}
so that $u'_0=u'_1=w_0^2\lambda^{t_0+1}$ are equal to the Werner parameter of the link that resulted from the swap of the new elementary link on segment $0$ and the already-active elementary link on segment $1$, $u'_2=1$ corresponds to the empty segment $2$, $u'_3=w_0$ corresponds to the new elementary link on segment $3$, and $u_4=w_0^3\lambda^{t_1+1}$ is the product of the Werner parameters of all links in state $s'$. We see that
\begin{align*}
    u'_0 &= u_1 w_0\lambda \\
    u'_1 &= u_1 w_0\lambda \\
    u'_2 &= u_2 \\
    u'_3 &= u_3 w_0 \\
    u'_4 &= u_4 w_0^2\lambda.
\end{align*}
This update can be achieved through right multiplication of $\uvec$ by the matrix
$$
M_{\{0100,1101\}}=\begin{pmatrix}
    0 & 0 & 0 & 0 & 0\\
    w_0\lambda & w_0\lambda & 0 & 0 & 0 \\
    0 & 0 & 1 & 0 & 0 \\
    0 & 0 & 0 & w_0 & 0 \\
    0 & 0 & 0 & 0 & w_0^2\lambda \\
\end{pmatrix}.
$$

As a second example consider the transition $s=1010$ to $s'=1110$. The Werner vector update $\uvec$ to $\uvec'$ must be of the form
\begin{multline*}
(
    w_0\lambda^{t_0}, 1, w_0\lambda^{t_2}, 1, w_0^2\lambda^{t_0+t_2})\to \\ 
(
    w_0^3\lambda^{t_0+t_2+2}, w_0^3\lambda^{t_0+t_2+2}, w_0^3\lambda^{t_0+t_2+2}, 1, w_0^3\lambda^{t_0+t_2+2} 
),
\end{multline*}
where $t_0$ and $t_2$ are the ages of the already-active links in segments $0$ and $2$ in state $s=1010$. 
The matrix that describes this transition is
$$
M_{\{1010,1110\}}=\begin{pmatrix}
    0 & 0 & 0 & 0 & 0 \\
    0 & 0 & 0 & 0 & 0 \\
    0 & 0 & 0 & 0 & 0 \\
    0 & 0 & 0 & 1 & 0\\
    w_0\lambda^2 & w_0\lambda^2 & w_0\lambda^2 & 0 & w_0\lambda^2 \\
\end{pmatrix}.
$$

Transitions in which links disappear can also be formulated in terms of \cref{eq:update_rule_w_dep_M}. It does not matter whether the link disappears due to a failed swap or due to a cutoff, since the Werner update $\uvec\to\uvec'$ is the same in either case. When a link disappears, the Werner parameter of its segment must be reset to $1$. The crucial point is that as long as there is no end-to-end entanglement there is always at least one segment without a link, and thus at least one entry in $\uvec$ is equal to $1$. This entry can be used to reset other segments to empty. To illustrate this, consider the transition $s=1010$ to $s'=0010$. The Werner vector has to be updated according to
\begin{multline*}
\uvec = (
    w_0\lambda^{t_0}, 1, w_0\lambda^{t_2},1, w_0^2\lambda^{t_0+t_2}
)\to \\
\uvec'=(
    1 , 1 , w_0\lambda^{t_2+1} , 1 ,w_0 \lambda^{t_2+1}
).
\end{multline*}
Notice that $u'_4$ is the product of the Werner parameters of all active links in state $s'=0010$, which in this case is only the link on segment $2$.
A matrix that gives the correct update on the Werner vectors is
$$
M_{\{1010,0010\}}=\begin{pmatrix}
    0 & 0 & 0 & 0 & 0 \\
    1 & 1 & 0 & 0 & 0 \\
    0 & 0 & \lambda & 0 & \lambda \\
    0 & 0 & 0 & 1 & 0\\
    0 & 0 & 0 & 0 & 0 \\
\end{pmatrix}.
$$
Note that the above matrix uses $u_1$ to reset $u_0$ to $1$. We could have chosen to use $u_4$ instead. This choice is irrelevant.

\paragraph{Extension to longer chains}
For more than five nodes the dimensionality of the Werner vector should increase beyond $n+1$. This is because for more than five nodes there can be more than two disjoint links in a state $s$. For example, in a six-node chain the transition $s=10101$ to $s'=00101$ could occur. There is no way to get the new Werner parameter $u'_5$ here in our present formalism. It should equal $u_2u_4\lambda^2$ because this is the product of the Werner parameters of all the active links in the chain, but the product $u_2u_4$ is not contained in the $n+1$-dimensional Werner vector considered up to this point. The general strategy when going to longer chains should be to allocate an entry in the Werner parameter for every possible product of the Werner parameters of disjoint links. We leave such generalization to future work.

\section{Supplementary material: Rate and Fidelity}\label{appendix: RF supplement}
In \cref{section: rate and fidelity} we stated that the probabilistic cutoff policy yields a lower fidelity than the deterministic cutoff policy, when compared at equal rate. Intuitively, we explained this result from the fact that in the probabilistic cutoff policy arbitrarily old links are allowed to contribute to the rate while in the deterministic cutoff policy the age of the end-to-end link is strictly constrained by some finite maximum. The ensemble of end-to-end links that constitute $\rho_\mathrm{e2e}$ for some particular value of the rate should thus have a higher expected age in the case of the probabilistic cutoff policy. We now prove this claim rigorously for the case $n_\mathrm{node}=3$, and provide numerical evidence for $n_\mathrm{node}=4$ and $n_\mathrm{node}=5$.

In \cref{appendix: Markov chain models} we have shown that the rate and expected Werner parameter in the three-node chain can be expressed analytically in terms of functions $A_\lambda(t_\mathrm{c})$ and $B_\lambda(p_\mathrm{c})$. If we are given a cutoff time $t_\mathrm{c}$ and wish to compare the two cutoff policies at the same rate, or equivalently, at the same expected delivery time, we must find $p_\mathrm{c}^*$ such that $A_1(t_\mathrm{c}) = B_1(p_\mathrm{c}^*)$. Using the explicit expressions for $A_\lambda(t_\mathrm{c})$ and $B_\lambda(p_\mathrm{c})$ given in \cref{appendix: Markov chain models}, we find
\begin{equation}\label{eq:critical pc}
    p_\mathrm{c}^* = \frac{p_\mathrm{g}(1-p_\mathrm{g})^{t_\mathrm{c}}}{1-(1-p_\mathrm{g})^{t_\mathrm{c}+1}}.
\end{equation}

We will show that $F(t_\mathrm{c})\geq F(p_\mathrm{c}^*)$.
The fidelity $F$ is given in terms of the expected Werner parameter as $F=\flatfrac{(1+3\overline{w})}{4}$. Hence, if we show that $\overline{w}(t_\mathrm{c})\geq \overline{w}(p_\mathrm{c}^*)$ then we may conclude that $F(t_\mathrm{c})\geq F(p_\mathrm{c}^*)$ as well. It can be seen from \cref{eq: analytic three node deterministic appendix} and \cref{eq: analytic three node probabilistic appendix} that in order to show that $\overline{w}(t_\mathrm{c})\geq \overline{w}(p_\mathrm{c}^*)$ it suffices to show that
\begin{equation} 
A_\lambda(t_\mathrm{c})\geq B_\lambda(p_\mathrm{c}^*)
\end{equation}
for any value of $\lambda\in (0,1]$ because by definition $A_1(t_\mathrm{c})=B_1(p_\mathrm{c}^*)$.

Our strategy is as follows. First we find the requirement on any $p_\mathrm{c}$ to satisfy $A_{\lambda}(t_\mathrm{c})\geq B_{\lambda}(p_\mathrm{c})$ for fixed $t_\mathrm{c}$, and then we show that $p_\mathrm{c}^*$ satisfies this requirement. From the explicit expressions for $A_\lambda(t_\mathrm{c})$ and $B_\lambda(p_\mathrm{c})$ it can be shown that
\begin{equation}\label{eq:pc bound proof two segment}
A_\lambda(t_\mathrm{c})\geq B_{\lambda}(p_\mathrm{c}) \iff p_\mathrm{c}\geq \frac{(1-\nu_\lambda)\nu_\lambda^{t_\mathrm{c}}}{1-\nu_\lambda^{t_\mathrm{c}+1}},
\end{equation}
where we use the shorthand notation $\nu_\lambda=(1-p_\mathrm{g})\lambda$.

We now show that $p_\mathrm{c}^*$ satisfies the latter bound in \cref{eq:pc bound proof two segment}. Also using the shorthand notation $\nu_\lambda=(1-p_\mathrm{g})\lambda$ to express $p_\mathrm{c}^*$ as given by \cref{eq:critical pc}, we see that \cref{eq:pc bound proof two segment} for $p_\mathrm{c}= p_\mathrm{c}^*$ becomes
\begin{equation}\label{eq: final inequality proof two segment}
    \frac{(1-\nu_1)\nu_1^{t_\mathrm{c}}}{1-\nu_1^{t_\mathrm{c}+1}}\geq \frac{(1-\nu_\lambda)\nu_\lambda^{t_\mathrm{c}}}{1-\nu_\lambda^{t_\mathrm{c}+1}}.
\end{equation}
For $t_\mathrm{c}=0$ the inequality is satisfied because both sides evaluate to $1$. For $t_\mathrm{c}>0$ we note that the left hand side is simply the right hand side evaluated at $\lambda=1$. The inequality can thus be proven by showing that the right hand side is an increasing function in $\lambda$ for $\lambda \in (0,1]$. To see that this is so, we use the geometric series to rewrite the right hand side as
\begin{equation}
    \frac{(1-\nu_\lambda)\nu_\lambda^{t_\mathrm{c}}}{1-\nu_\lambda^{t_\mathrm{c}+1}} = \left[\sum_{t=0}^{t_\mathrm{c}}\nu_\lambda^{t-t_\mathrm{c}}\right]^{-1}.
\end{equation}
Each term $\nu_\lambda^{t-t_\mathrm{c}}$ in the sum is decreasing in $\lambda$ for $\lambda \in (0,1]$. Therefore, the right hand side of \cref{eq: final inequality proof two segment} is an increasing function in $\lambda$. This establishes the inequality of \cref{eq: final inequality proof two segment} for $t_\mathrm{c}>0$.

We conclude that $p_\mathrm{c}^*$ indeed satisfies $A_\lambda(t_\mathrm{c})\geq B_\lambda(p_\mathrm{c}^*)$. Hence, in the three-node chain, if the cutoff time $t_\mathrm{c}$ and probability $p_\mathrm{c}$ are such that the rates satisfy $R(t_\mathrm{c})=R(p_\mathrm{c})$, then $F(t_\mathrm{c})\geq F(p_\mathrm{c})$. 

The proof for $n_\mathrm{node}=3$ explicitly uses the analytic expressions of $R$ and $\overline{w}$. For $n_\mathrm{node}=4$ and $n_\mathrm{node}=5$ we do not have such analytic results, and so we resort to numerical evaluations of the rate-fidelity curves as in \cref{fig: rate fidelity curve} to support the claim that deterministic cutoffs yield higher fidelities than probabilistic cutoffs when compared at equal rate. Two examples for $n_\mathrm{node}=4$ and $n_\mathrm{node}=5$ are shown in \cref{fig: more rate fidelity curves}. More such examples may be generated using the code provided at \cite{Grimbergen2026}. For all parameter configurations that we have tried the claim holds.

\begin{figure*}
     \centering
     \begin{subfigure}[b]{0.45\textwidth}
         \centering
         \includegraphics[width=\textwidth]{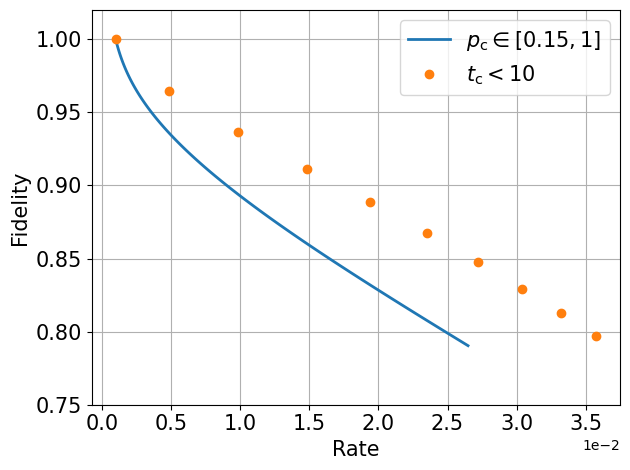}
         \caption{$n_\mathrm{node} = 4$}
     \end{subfigure}
     \hfill
     \begin{subfigure}[b]{0.45\textwidth}
         \centering
         \includegraphics[width=\textwidth]{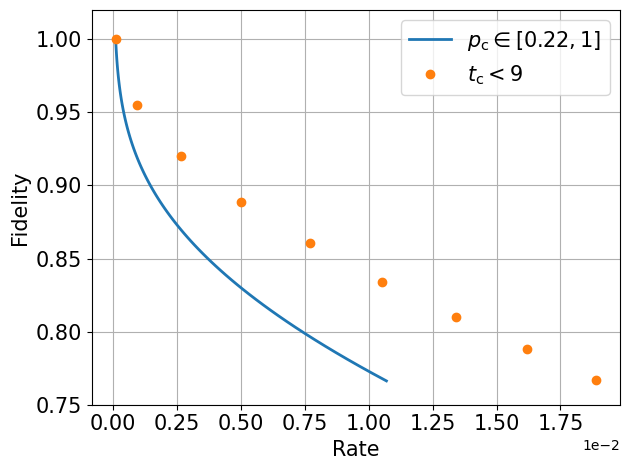}
         \caption{$n_\mathrm{node} = 5$}
     \end{subfigure}
    \caption{\textbf{Deterministic cutoffs provide higher fidelities than probabilistic cutoffs in chains with $n_\mathrm{node}=4$ and $n_\mathrm{node}=5$. }Rate-fidelity curves for quantum repeater chains with probabilistic or deterministic cutoff policy with (a) $n_\mathrm{node}=4$ and (b) $n_\mathrm{node}=5$. Notice that the scale of the rate-axes and the range of cutoff parameters shown are different between (a) and (b), the parameter ranges of the cutoff probability and age are adjusted so that the fidelity is at least above $0.75$. Apart from $n_\mathrm{node}$ the parameters are the same as in \cref{fig: rate fidelity curve}, namely $p_\mathrm{g}=0.1$, $\tau_\mathrm{coh}=20$, $w_0=1$ and $p_\mathrm{s}=1$. }
    \label{fig: more rate fidelity curves}
\end{figure*}

\section{Supplementary material: Secret-key rate}\label{appendix: SKR}
In this appendix we explain how the secret-key rate is optimized over the cutoff parameters $t_\mathrm{c}$ and $p_\mathrm{c}$ in the deterministic and probabilistic cutoff policy, respectively. We then discuss the approximate convergence of the ratio of the maximized secret-key rate between the policies as a function of $p_\mathrm{g}$ and $\tau_\mathrm{coh}$. 
Finally, we explain how the secret-key rates in \cref{fig: skr n} have been computed using Monte Carlo simulation.

\subsection{Maximization of secret-key rate}
We explain below how the secret-key rate is maximized over the cutoff parameters $p_\mathrm{c}$ and $t_\mathrm{c}$ for a fixed set of hardware parameters $n_\mathrm{node}$, $p_\mathrm{g}$, $p_\mathrm{s}$, $\tau_\mathrm{coh}$ and $w_0$. 
Recall that the secret-key rate is computed from the rate and expected Werner parameter according to \cref{eq:skr definition} as
\begin{equation}
    \SKR= R \cdot \SKF(\overline{w}).
\end{equation}

\paragraph{Probabilistic cutoffs} The maximized secret-key rate in the probabilistic cutoff policy, $\max_{p_\mathrm{c}\in [0,1]}\SKR(p_\mathrm{c})$, is numerically approximated as follows. We find the maximum of $\SKR(p_\mathrm{c})$ over a uniform discretization of the interval $[0,1]$. This gives an estimate of the value of $p_\mathrm{c}$ that maximizes the secret-key rate, up to the grid spacing of the discretization. The grid spacing that we use is $10^{-2}$. When the rates and fidelities are computed using the Monte Carlo simulation (discussed in \cref{section: monte carlo}) then we use this as the estimate for the optimal value of the cutoff probability $p_\mathrm{c}$. When the rates and fidelities are computed using the exact methods of \cref{appendix: Markov chain models}, then we sharpen the estimate of the optimal value of the cutoff probability $p_\mathrm{c}$ by using golden section search \cite[Chapter 10.2]{pressnumerical} to a precision of $10^{-4}$.

We note that the golden section search cannot be applied immediately to the interval $[0,1]$. Golden section search finds the maximum of a function with a single extremum. In most parameter regimes the secret-key rate as a function of $p_\mathrm{c}$ indeed has a single extremum. It typically looks like in \cref{fig:skr_vs_pc_n4_pg01_ps1_taucoh20_w01}. The local maximum corresponds to the optimal trade-off between rate (which increases when $p_\mathrm{c}$ is decreased) and secret-key fraction (which decreases when $p_\mathrm{c}$ is decreased). But in some parameter regimes the secret-key rate can have a local minimum near $p_\mathrm{c}=1$ as well. An example of this behaviour is shown in \cref{fig:skr_vs_pc_n4_pg07_ps1_taucoh20_w01}. The local minimum near $p_\mathrm{c}=1$ can arise because as the expected Werner parameter decreases the secret-key fraction initially decreases very rapidly, but later only modestly. Initially, then, the increase in rate due to the decrease in $p_\mathrm{c}$ does not weigh up against the dramatic decrease in secret-key fraction, and on the whole the secret-key rate decreases. This happens in particular when the elementary link generation probability is large and coherence times are small. However, as $p_\mathrm{c}$ gets further away from $1$ the secret-key fraction decreases smoothly to $0$, while the rate increases smoothly. A single local maximum can then form at the optimal trade-off between rate and secret-key fraction, as is the case in \cref{fig:skr_vs_pc_n4_pg07_ps1_taucoh20_w01}. The initial grid search returns a value for $p_\mathrm{c}$ around this local maximum, if it exists. The golden section search then localizes the local maximum more sharply. 

\begin{figure}[h]
    \centering
\begin{subfigure}[b]{0.45\linewidth}
    \includegraphics[width=\textwidth]{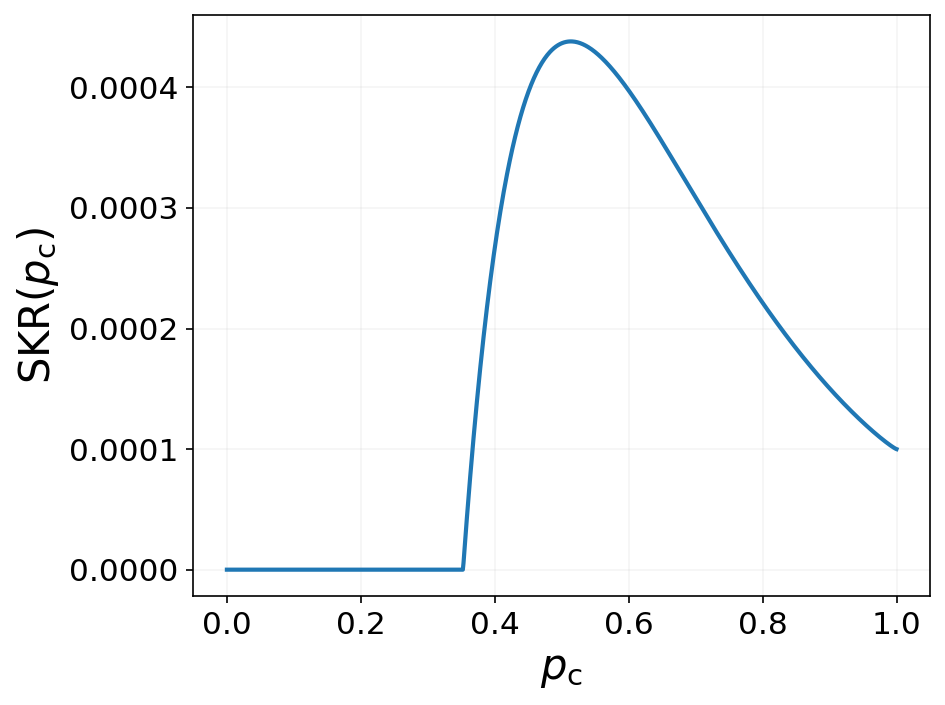}
    \caption{$p_\mathrm{g}=0.1$}
    \label{fig:skr_vs_pc_n4_pg01_ps1_taucoh20_w01}
\end{subfigure}
\begin{subfigure}[b]{0.45\linewidth}
    \includegraphics[width=\textwidth]{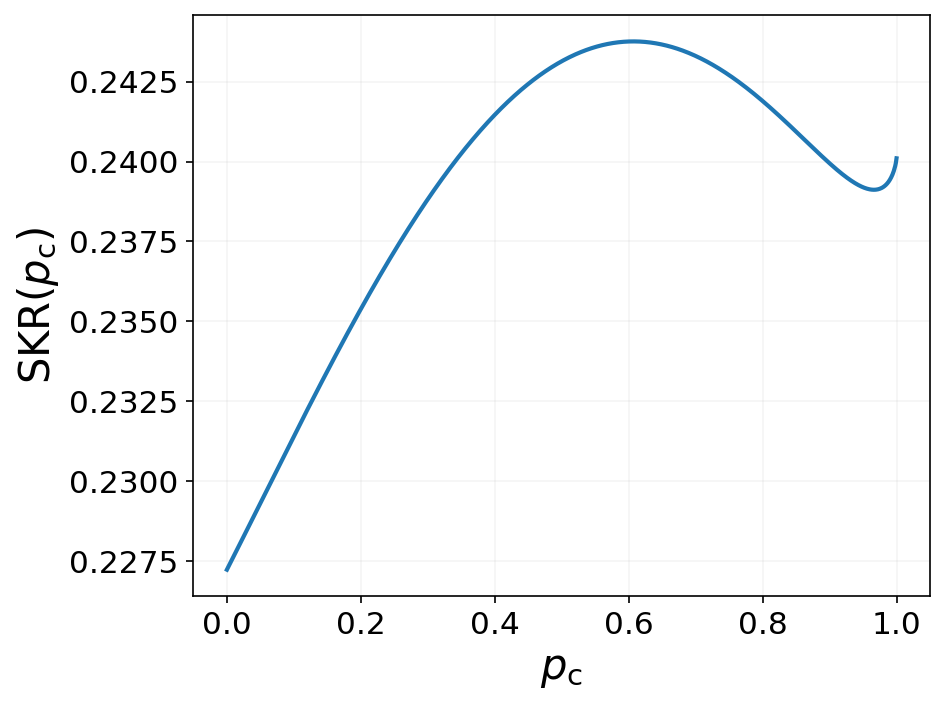}
    \caption{$p_\mathrm{g}=0.7$}
    \label{fig:skr_vs_pc_n4_pg07_ps1_taucoh20_w01}
\end{subfigure}
\caption{\textbf{The secret-key rate as function of $p_\mathrm{c}$ typically has a single local maximum, but in some cases it can also exhibit a local minimum near $p_\mathrm{c}=1$.} (a) The secret-key rate as function of $p_\mathrm{c}\in [0,1]$ for a five-node chain with parameters $p_\mathrm{g}=0.1$, $p_\mathrm{s}=1$, $\tau_\mathrm{coh}=20$ and $w_0=1$. The secret-key rate has a single local maximum as a function of $p_\mathrm{c}$. This is the case in most parameter configurations. (b) The secret-key rate as function of $p_\mathrm{c}\in [0,1]$ for a five-node chain with the same parameters as in (a), except that $p_\mathrm{g}=0.7$. In this case the secret-key rate has a local minimum near $p_\mathrm{c}=1$. Such a local minimum can form when the link generation probability $p_\mathrm{g}$ is large and the coherence time $\tau_\mathrm{coh}$ is small due the non-linearity of the secret-key fraction.}
\end{figure}

\paragraph{Deterministic cutoffs}
We now turn to the computation of the maximum secret-key rate in the deterministic cutoff policy, $\max_{t_\mathrm{c}\in \{0,1,2,\dots\}}\SKR(t_\mathrm{c})$. The optimal value of the cutoff time can in principle take any nonnegative integer value. However, for three-node chains we can prove the following two \textit{monotonicity properties}. Namely, that for all $t_\mathrm{c}\geq 0$,
\begin{enumerate}
    \item $R(t_\mathrm{c})\leq \lim_{t_\mathrm{c}\to\infty}R(t_\mathrm{c})$,
    \item $\overline{w}(t_\mathrm{c})\geq \overline{w}(t_\mathrm{c}+1)$.
\end{enumerate}
The first monotonicity property states that the rate with cutoffs is always less than the rate without cutoffs. Note that the rate without cutoffs can also be computed from the probabilistic cutoff policy by setting the cutoff probability to zero, that is, $\lim_{t_\mathrm{c}\to \infty}R(t_\mathrm{c})=R(p_\mathrm{c}=0)$.
The second monotonicity property states that the expected Werner parameter decreases as the cutoff time is increased.

Using these monotonicity properties we can guarantee that the optimal value of $t_\mathrm{c}$ is achieved below some finite value $t_\mathrm{c}^\mathrm{max}$. Indeed, they imply that the maximum secret-key rate that can be achieved for $t_\mathrm{c}\geq t_\mathrm{c}^\mathrm{max}$ is bounded by
\begin{equation}
    \max_{t_\mathrm{c}\geq t_\mathrm{c}^\mathrm{max}}\SKR(t_\mathrm{c}) \leq \left(\lim_{t_\mathrm{c}\to \infty}R(t_\mathrm{c})\right)\SKF(\overline{w}(t_\mathrm{c}^\mathrm{max})).
\end{equation}
Hence, we can guarantee that $\max_{t_\mathrm{c}\geq 0}\SKR(t_\mathrm{c})$ is found for $t_\mathrm{c}<t_\mathrm{c}^\mathrm{max}$ by verifying that
\begin{equation}
    \max_{t_\mathrm{c}<t_\mathrm{c}^\mathrm{max}}\SKR(t_\mathrm{c}) > \left(\lim_{t_\mathrm{c}\to \infty}R(t_\mathrm{c})\right) \SKF(\overline{w}(t_\mathrm{c}^\mathrm{max})).
\end{equation}

The proof of the two monotonicity properties for $n_\mathrm{node}=3$ uses the analytic formulas for the rate and expected Werner parameter given by \cref{eq: analytic three node deterministic appendix} and \cref{eq: analytic three node probabilistic appendix}. To show the first monotonicity property we note that $A_1(t_\mathrm{c})$ increases monotonically with $t_\mathrm{c}$ and hence $R=p_\mathrm{g}^2p_\mathrm{s}\flatfrac{(1+2A_1(t_\mathrm{c}))}{(1+2p_\mathrm{g}A_1(t_\mathrm{c}))}$ increases monotonically with $t_\mathrm{c}$. It follows immediately that $R(t_\mathrm{c})\leq \lim_{t_\mathrm{c}\to\infty}R(t_\mathrm{c})$ for three-node chains. For the second monotonicity property, we compute
\begin{align*}
    \overline{w}(t_\mathrm{c}+1) &= w_0^2 \frac{1+2A_\lambda(t_\mathrm{c}+1)}{1+2A_1(t_\mathrm{c}+1)} \\
    &= w_0^2\frac{1+2A_\lambda(t_\mathrm{c})}{1+2A_1(t_\mathrm{c}+1)} +w_0^2\frac{2(1-p_\mathrm{g})^{t_\mathrm{c}+1}\lambda^{t_\mathrm{c}+1}}{1+2A_1(t_\mathrm{c}+1)}
    \\
    &= \frac{1+2A_1(t_\mathrm{c})}{1+2A_1(t_\mathrm{c}+1)}\overline{w}(t_\mathrm{c})\\
    &\quad+\frac{(1+2A_1(t_\mathrm{c}+1))-(1+2A_1(t_\mathrm{c}))}{1+2A_1(t_\mathrm{c}+1)}w_0^2\lambda^{t_\mathrm{c}+1},
\end{align*}
where in the first step we use the definition of $A_\lambda(t_\mathrm{c})$ given in \cref{eq: A lambda sum} in terms of a sum over $t$. In the second step we use the expression of $\overline{w}(t_\mathrm{c})$ in terms of $A_\lambda(t_\mathrm{c})$ given by \cref{eq: analytic three node deterministic appendix} to get the first term and the definition of $A_1(t_\mathrm{c})$ in terms of the sum in \cref{eq: A lambda sum} to get the second term.
We have thus written $\overline{w}(t_\mathrm{c}+1)$ as a convex combination of $\overline{w}(t_\mathrm{c})$ and $w_0^2\lambda^{t_\mathrm{c}+1}$. Since $\overline{w}(t_\mathrm{c})$ may be written as $\overline{w}(t_\mathrm{c})=\sum_{t=0}^{t_\mathrm{c}}p_t w_0^2 \lambda^{t}$, where $p_t$ is the probability that the end-to-end link has age $t$, and $\lambda \leq 1$ by definition, it follows that $w_0^2\lambda^{t_\mathrm{c}+1}\leq \overline{w}(t_\mathrm{c})$. Hence, $\overline{w}(t_\mathrm{c}+1)\leq \overline{w}(t_\mathrm{c})$. This proves the second monotonicity property for three-node chains.

We have not been able to find a general proof that the monotonicity properties hold for longer chains. We therefore treat them as assumptions for chains longer than three nodes. We have numerically verified that the two monotonicity properties are at least valid up to $t_\mathrm{c}^\mathrm{max}=30$ for $n_\mathrm{node}=4$ and $t_\mathrm{c}^\mathrm{max}=20$ for $n_\mathrm{node}=5$, in all cases considered.

\subsection{Convergence of ratios of maximized secret-key rates}
\cref{fig: skr pg} in the main text shows the maximized secret-key rate as function of $p_\mathrm{g}\in [10^{-3},1]$ for $n_\mathrm{node}=3$, $4$ and $5$ at $\tau_\mathrm{coh}=50$. The ratios $\max_{p_\mathrm{c}}\SKR(p_\mathrm{c})/\max_{t_\mathrm{c}}\SKR(t_\mathrm{c})$ of these maximized secret-key rates are shown in \cref{fig: skr pg ratios}. It can be seen that the ratios exhibit convergence as $p_\mathrm{g}$ approaches $10^{-3}$ from above. 

In \cref{fig: skr tau_coh ratios} we show the ratio of maximized secret-key rates at $p_\mathrm{g}=10^{-3}$ as a function of $\tau_\mathrm{coh}$. In this case, the ratios show convergence around $\tau_\mathrm{coh}=50$. Hence, as stated in the main text, the ratios of the maximized secret-key rates at $p_\mathrm{g}=10^{-3}$ and $\tau_\mathrm{coh}=50$ give an indication of the worst-case secret-key rate loss of the probabilistic cutoff policy with respect to the deterministic cutoff policy, in the parameter regime explored.

\begin{figure}
    \centering
    \includegraphics[width=0.9\linewidth]{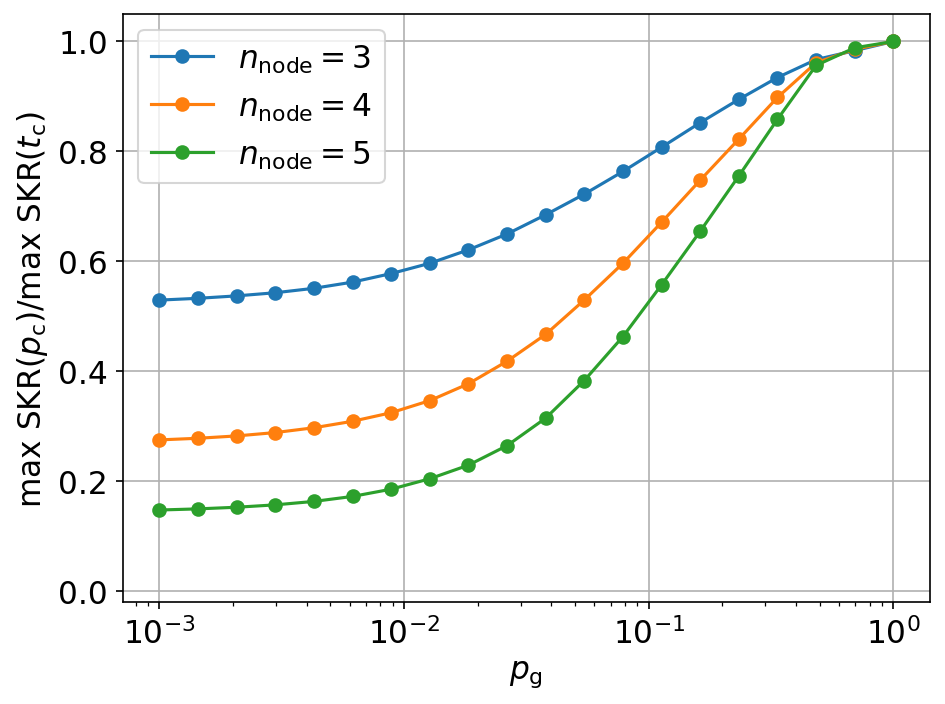}
    \caption{Ratios $\flatfrac{\max_{p_\mathrm{c}}\SKR}{\max_{t_\mathrm{c}}\SKR}$ of maximized secret-key rates in \cref{fig: skr pg} for $p_\mathrm{s}=1$, $\tau_\mathrm{coh}=50$ and $w_0=1$ show convergence around $p_\mathrm{g}=10^{-3}$.}
    \label{fig: skr pg ratios}
\end{figure}

\begin{figure}
    \centering
    \includegraphics[width=0.9\linewidth]{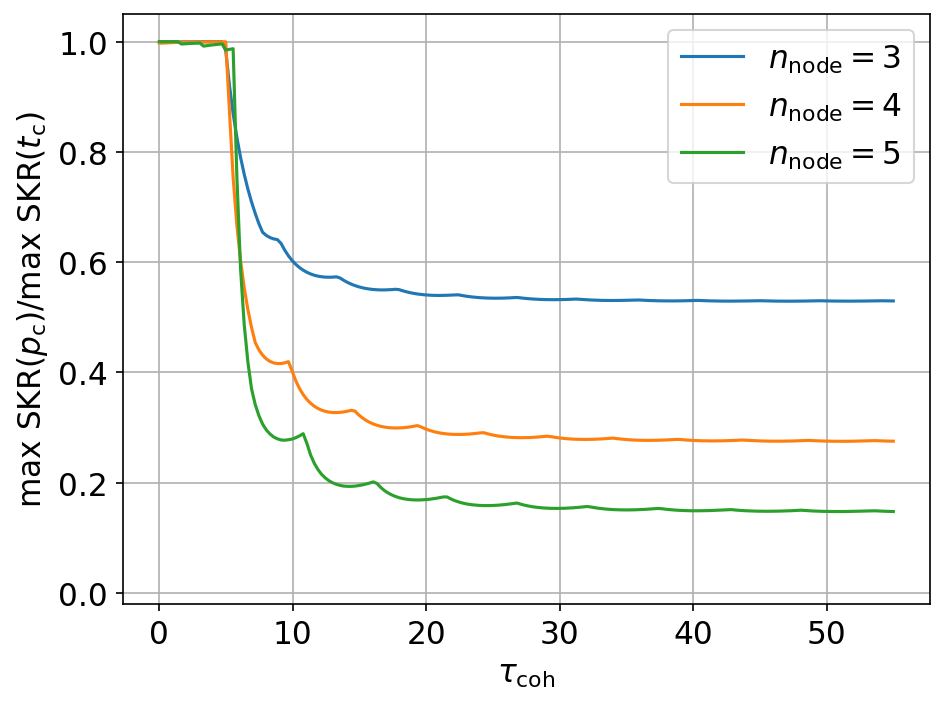}
    \caption{Ratios $\flatfrac{\max_{p_\mathrm{c}}\SKR}{\max_{t_\mathrm{c}}\SKR}$ of maximized secret-key rates at $p_\mathrm{g}=10^{-3}$, $p_\mathrm{s}=1$ and $w_0=1$ as function of $\tau_\mathrm{coh}$ show convergence around $\tau_\mathrm{coh}=50$.}
    \label{fig: skr tau_coh ratios}
\end{figure}

\subsection{Monte Carlo simulation}\label{section: monte carlo}
Complementary to the Markov methods of \cref{appendix: Markov chain models}, the rate and expected Werner parameter can also be approximated through a Monte Carlo simulation. One time step of the Monte Carlo simulation corresponds to one time step of the repeater chain. The number of steps required to generate a single sample of end-to-end link generation thus requires $T$ time steps, where $T$ is the delivery time of the end-to-end link. The expected number of time steps needed to generate one sample of an end-to-end link is $\overline{T}$. Therefore, the method can only be reasonably used for high values of the elementary link generation probability $p_\mathrm{g}$. We use the Monte Carlo simulation to approximate the rate and expected Werner parameter for chains with a number of nodes beyond the reach of our exact methods. 

A single sample $T^{(i)}$ of the end-to-end delivery time $T$ and a single sample $w^{(i)}$ of the end-to-end Werner parameter $w$ are obtained by simulating the evolution of the repeater chain from the empty chain until an end-to-end link is generated. Each HEG attempt is simulated by sampling a Bernoulli trial with success probability $p_\mathrm{g}$. Entanglement swaps and cutoffs are subsequently performed according to the swap-asap policy and the corresponding cutoff policy. The simulation ends when an end-to-end link is created. The number of time steps until the end-to-end link is generated is recorded as the sample $T^{(i)}$ and the Werner parameter of the end-to-end link is recorded as the sample $w^{(i)}$. After generating $N_\mathrm{samples}$, the expected delivery time $\overline{T}$ and expected Werner parameter $\overline{w}$ are approximated by the sample means
\begin{equation}
    \widehat T = \frac{1}{N_\mathrm{samples}}\sum_{i=1}^{N_\mathrm{samples}}T^{(i)}    
\end{equation}
and
\begin{equation}
\widehat w = \frac{1}{N_\mathrm{samples}}\sum_{i=1}^{N_\mathrm{samples}}w^{(i)}. 
\end{equation}
The code at \cite{Grimbergen2026} contains a verification of the simulated $\widehat T$ and $\widehat w$ against the exactly computed $\overline{T}$ and $\overline{w}$ from \cref{appendix: Markov chain models} for chains up to five nodes.

For the estimation of the secret-key rate we recall that the secret-key rate depends on the \emph{expected} delivery time and \emph{expected} Werner parameter. Given estimates $\widehat T$ and $\widehat w$ for these quantities an estimate for the secret-key rate can be obtained as
\begin{equation}
    \widehat\SKR = \widehat T^{-1} \SKF(\widehat w).
\end{equation}
This estimate is made in accordance with the definition of the rate as $R=\overline{T}^{-1}$. In this way, we can approximate the secret-key rate. But it is a non-trivial task to obtain the standard deviation on this estimate by error propagation of the uncertainty in $\overline{w}$ since the secret-key rate is a highly non-linear function of $\overline{w}$. Therefore, to obtain statistics on the secret-key rate, we compute multiple estimates of $\overline{T}$ and $\overline{w}$, so that we also get multiple estimates of $\SKR$. We do this by computing $\widehat{T}$ and $\widehat{w}$ for $N_\mathrm{batches}$ batches of $N_\mathrm{samples}$ samples each. This yields estimates $\widehat{T}^{(j)}$ of the expected delivery time and $\widehat{w}^{(j)}$ of expected Werner parameter, where the index $j=1,\dots,N_\mathrm{batches}$ runs over the batches. 
The secret-key rate is then approximated as the mean of $\widehat{\SKR}^{(j)}$ over $j=1,\dots, N_\mathrm{batches}$. The simulated secret-key rates shown in \cref{fig: skr n} are the result of maximizing over the mean secret-key rates obtained over $N_\mathrm{batches}=20$ batches of $N_\mathrm{samples}=100$ samples each. The error bars are the standard deviation in the $\widehat \SKR^{(j)}$ over $j=1,\dots, N_\mathrm{batches}$. 
The maximization over $p_\mathrm{c}$ is done by taking the maximum mean value for the secret-key rate obtained over a uniform discretization of the unit interval $p_\mathrm{c}\in [0,1]$ into steps of size $10^{-2}$. The maximization over $t_\mathrm{c}$ is done by taking the maximum mean value for the secret-key rate obtained for $t_\mathrm{c}\leq \tau_\mathrm{coh}$.  


\section{Deterministic cutoff policy with end-to-end cutoff}\label{appendix: other cutoff policies}
The deterministic cutoff policy that has been considered in the main text as a benchmark was selected for its similarity to the probabilistic cutoff policy. Other choices of deterministic cutoff policy are also possible. We discuss one of them in this appendix. 

A second deterministic cutoff policy can be obtained from the one that we have considered so far by adding one extra rule. The extra rule that we introduce is a post-selection rule, or end-to-end (e2e) cutoff: If the end-to-end link has age $t>t_\mathrm{c}$, then it is discarded. We call this the \textit{deterministic cutoff policy with e2e cutoff}. 

We note that for three-node chains there is only one swap needed to create the end-to-end link, and the assumption of swap-asap implies that one of the links involved in this swap will always have age $0$. Even without the e2e cutoff this already guarantees that the age of the end-to-end link will be less than $t_\mathrm{c}$. Therefore, for $n_\mathrm{node}=3$ the deterministic cutoff policy is not changed by the introduction of the e2e cutoff.

However, for $n_\mathrm{node}>3$ the e2e cutoff does yield a different deterministic cutoff policy. The fact that all internal links have age less than $t_\mathrm{c}$ does not guarantee anymore that this will also be true for the end-to-end link. We illustrate this with an example for the four-node chain with $t_\mathrm{c}=1$, displayed in \cref{fig:post-selection-pathway} and explained below. 

Suppose that in the first time step the two links on the edges of the repeater chain are generated, but the generation of the central link fails. Both links on the edges are kept because their age is less than $t_\mathrm{c}$. During the HEG round of the second time step the two edge links each age by one time step. If the central link is then generated and all swaps are performed, this results in an end-to-end link of age $2$ since the ages of all links involved in the swap have to be added. The age of the end-to-end link in this case exceeds the cutoff time, and has to be discarded. From this example we guess that the e2e cutoff will lower the rate, because some end-to-end links that were accepted without post-selection are now being discarded. On the other hand, it is expected to increase the fidelity since the end-to-end links that are accepted will be younger on average. These claims can be verified through exact computation of the rates and fidelities.

\begin{figure}
    \centering
    \includegraphics[width=0.9\linewidth]{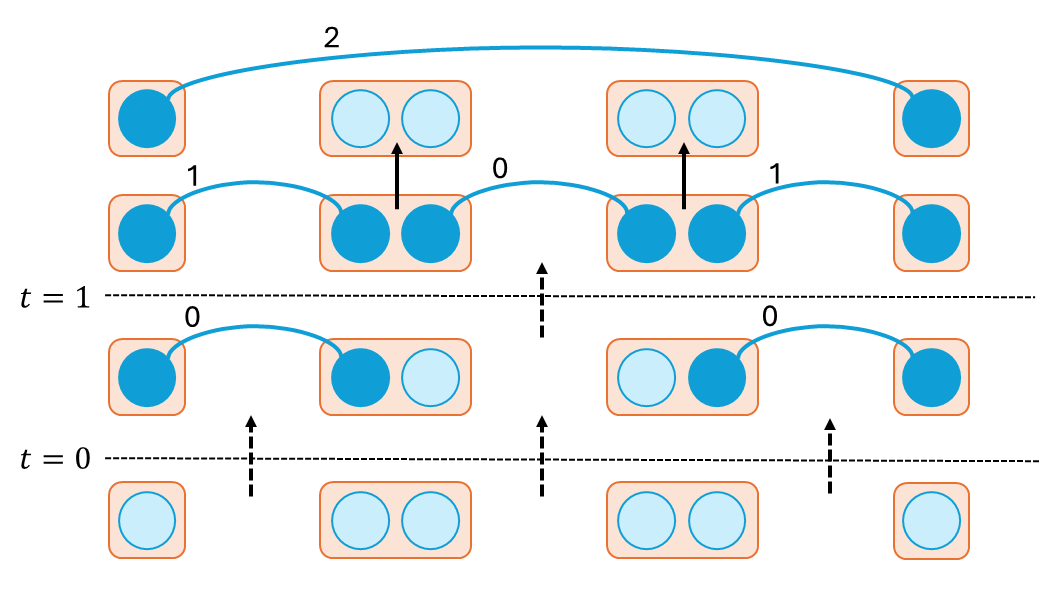}
    \caption{\textbf{Example of sequence of link generation in four-node chain that triggers the end-to-end cutoff when $t_\mathrm{c}=1$.} Link ages are displayed next to links. Dashed arrows indicate HEG attempts and solid arrows indicate entanglement swaps. As explained in the text, this sequence of link generations provides the intuition for why end-to-end cutoffs can hurt repeater chain performance at high elementary link generation probability, as observed in \cref{fig:large pg rate fidelity curves with post-selection} and \cref{fig:large pg rate fidelity curves with post-selection rate suppression}.}
    \label{fig:post-selection-pathway}
\end{figure}

The rates and fidelities for the deterministic cutoff policy with e2e cutoff can be computed using the same type of Markov chain model as described in \cref{appendix: Markov chain models} for the case without e2e cutoffs. The only difference is that transitions into end-to-end states with age $t>t_\mathrm{c}$ have to be replaced with transitions into the empty state. The resulting Markov chain has been implemented in the code at \cite{Grimbergen2026}.

From this Markov chain model we confirm that, at the same cutoff time $t_\mathrm{c}$, the e2e cutoff improves fidelity at the cost of a lower rate. This can be seen in \cref{fig: rate fidelity curves with post-selection}. For small values of $p_\mathrm{g}$ the decrease in rate and increase in fidelity due to the e2e cutoff lead to an improved rate-fidelity trade-off, as can be seen from the fact that the green squares are above the orange dots in \cref{fig:small pg rate fidelity curves with post-selection}. However, for large values of $p_\mathrm{g}$ the decrease in rate is so drastic that it can worsen the rate-fidelity trade-off. In the case shown in \cref{fig:large pg rate fidelity curves with post-selection}, at $p_\mathrm{g}=0.8$, we even see that the e2e cutoff causes the deterministic cutoff to yield lower rates and fidelities than the probabilistic cutoff policy.

In fact, for $p_\mathrm{g}=0.85$ it can be seen in \cref{fig:large pg rate fidelity curves with post-selection rate suppression} that the rate at $t_\mathrm{c}=1$ is suppressed so much by the post-selection, that it is even less than the rate at $t_\mathrm{c}=0$. To see why this can happen we return to the example of \cref{fig:post-selection-pathway}. 

If for $t_\mathrm{c}=1$ the two links on the edge of a four-node chain are generated, then they must be stored. But in the next time step, independent of whether or not the central link will be generated, there will not be an acceptable end-to-end link. If the central link is not generated, then the links on the edges are discarded because they violate the cutoff condition. On the other hand, if the central link is generated, then the end-to-end link itself violates the post-selection condition and it will also be discarded. Hence, storing the two edge links will invariably waste the next time step if $t_\mathrm{c}=1$. On the other hand, for $t_\mathrm{c}=0$ this time step is not wasted because the chain returns to the empty state immediately after having generated the two edge links. Our computations show that for large enough $p_\mathrm{g}$ this wasting of time steps can decrease the rate at $t_\mathrm{c}=1$ below the rate of $t_\mathrm{c}=0$. 

This analysis of the deterministic cutoff policy with e2e cutoff shows that strict constraints on the fidelity that improve entanglement distribution performance in one parameter regime may actually be harmful in other regimes. One could work around this problem by using more complicated deterministic cutoff policies with additional rules. In the four-node chain one could for example make sure to never store the two edge links when they cannot yield an acceptable end-to-end link.

\begin{figure*}[b]
     \centering
     \begin{subfigure}[b]{0.3\textwidth}
         \centering
         \includegraphics[width=\textwidth]{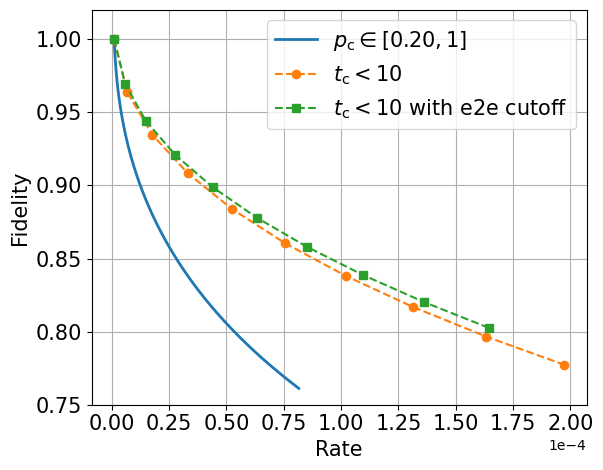}
         \caption{$p_\mathrm{g}=0.01$}
         \label{fig:small pg rate fidelity curves with post-selection}
     \end{subfigure}
     \hfill
     \begin{subfigure}[b]{0.3\textwidth}
         \centering
         \includegraphics[width=\textwidth]{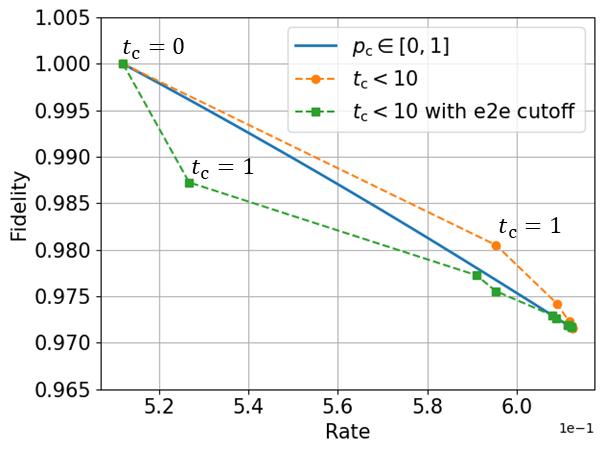}
         \caption{$p_\mathrm{g}=0.8$}
         \label{fig:large pg rate fidelity curves with post-selection}
     \end{subfigure}
     \hfill
     \begin{subfigure}[b]{0.3\textwidth}
         \centering
         \includegraphics[width=\textwidth]{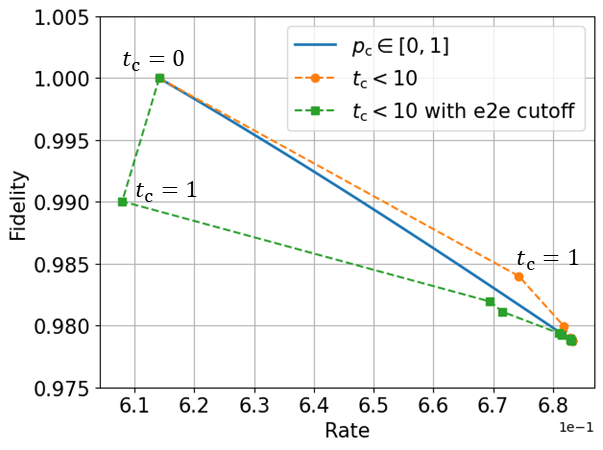}
         \caption{$p_\mathrm{g}=0.85$}
         \label{fig:large pg rate fidelity curves with post-selection rate suppression}
     \end{subfigure}
    \caption{\textbf{Including e2e cutoffs in the deterministic cutoff policy improves the fidelity but suppresses the rate that is achieved with a particular cutoff time.} Rate-fidelity curves for quantum repeater chains with probabilistic cutoff policy ($p_\mathrm{c}$), deterministic cutoff policy ($t_\mathrm{c}$) and deterministic cutoff policy with e2e cutoff ($t_\mathrm{c}$ with e2e cutoff). Results shown are for $n_\mathrm{node}=4$ and $\tau_\mathrm{coh}=20$ and have been obtained from exact Markov chain computations as described in \cref{appendix: Markov chain models}. The discrete markers used for the deterministic cutoff policies have been connected with dashed lines to visualize the order of the cutoff times. (a) For small $p_\mathrm{g}$ the e2e cutoff improves the rate-fidelity trade-off of the deterministic cutoff policy. More precisely, comparing the two deterministic cutoff policies at the same cutoff time $t_\mathrm{c}$, we see that the rate is slightly lowered by the e2e cutoff but the fidelity is increased. (b) For large $p_\mathrm{g}$ the fidelity is still improved by the e2e cutoff, but the decrease in rate can be drastic. In fact, for $p_\mathrm{g}$ around $0.8$ the rate is decreased so much that even the probabilistic cutoff policy achieves higher rates and fidelities. (c) For $p_\mathrm{g}=0.85$ the rate with e2e cutoffs at $t_\mathrm{c}=1$ is even suppressed below the rate at $t_\mathrm{c}=0$.}
    \label{fig: rate fidelity curves with post-selection}
\end{figure*}

\end{appendices}

\end{document}